\documentclass[5p,times,number]{elsarticle}

\usepackage[T1]{fontenc}
\usepackage[utf8]{inputenc}
\usepackage{microtype}
\usepackage{amsmath,amssymb,amsthm,mathtools}
\usepackage{booktabs,tabularx,array}
\usepackage{float}
\usepackage{placeins}
\usepackage{flushend}
\usepackage{algorithm}
\usepackage[noend]{algpseudocode}
\usepackage{enumitem}
\usepackage{xcolor}
\usepackage{graphicx}
\usepackage{multirow}
\usepackage{balance}

\usepackage{tikz}
\usetikzlibrary{arrows.meta,backgrounds,calc,fit,positioning,shapes.geometric}
\usepackage{hyperref}
\usepackage[nameinlink,noabbrev]{cleveref}
\biboptions{sort&compress}
\makeatletter
\providecommand{\theHALG@line}{\thealgorithm.\arabic{ALG@line}}
\makeatother

\definecolor{qbpnavy}{HTML}{17324D}
\definecolor{qbpblue}{HTML}{2A6FBB}
\definecolor{qbpteal}{HTML}{168C88}
\definecolor{qbpgreen}{HTML}{5A9E45}
\definecolor{qbporange}{HTML}{E8892D}
\definecolor{qbpred}{HTML}{C94A4A}
\definecolor{qbpgray}{HTML}{65727E}
\definecolor{qbplight}{HTML}{EEF4F7}

\makeatletter
\newcommand{\qbpMarkRedBibEntry}[1]{%
  \expandafter\gdef\csname qbp@redbib@#1\endcsname{}}

\let\qbp@original@bibitem\bibitem
\newcommand{\qbp@setbibcolour}[1]{%
  \ifcsname qbp@redbib@#1\endcsname
    \color{red}\hypersetup{urlcolor=red}%
  \else
    \color{black}\hypersetup{urlcolor=qbpblue}%
  \fi}
\def\qbp@bibitem@plain#1{%
  \qbp@setbibcolour{#1}\qbp@original@bibitem{#1}}
\def\qbp@bibitem@optional[#1]#2{%
  \qbp@setbibcolour{#2}\qbp@original@bibitem[#1]{#2}}
\def\bibitem{\@ifnextchar[\qbp@bibitem@optional\qbp@bibitem@plain}
\makeatother

\hypersetup{
  colorlinks=true,
  linkcolor=qbpnavy,
  citecolor=qbpteal,
  urlcolor=qbpblue,
  pdftitle={Graph-Aware Exact Branch-and-Bound with Device Profiles for Static Qubit Allocation}
}
\setlist[itemize]{leftmargin=*,topsep=3pt,itemsep=2pt}

\newtheorem{definition}{Definition}
\newtheorem{proposition}{Proposition}

\theoremstyle{remark}

\newcommand{\QAP}{\textnormal{\textsc{QAP}}}
\newcommand{\GLB}{\textnormal{\textsc{GLB}}}
\newcommand{\HHB}{\textnormal{\textsc{HHB}}}
\newcommand{\LAP}{\textnormal{\textsc{LAP}}}
\newcommand{\cfg}[1]{\texttt{#1}}
\newcommand{\Aut}{\operatorname{Aut}}
\newcommand{\lap}{\operatorname{LAP}}

\newcommand{\Fb}{\mathbf{F}}
\newcommand{\Db}{\mathbf{D}}
\newcommand{\Tb}{\mathbf{T}}
\newcommand{\Ab}{\mathbf{A}}
\newcommand{\Gb}{\mathbf{G}}

\journal{Future Generation Computer Systems}

\begin{document}

\begin{frontmatter}

\title{{Graph-Aware Exact Branch-and-Bound with Device Profiles for Static Qubit Allocation}}

% AUTHOR ACTION REQUIRED BEFORE SUBMISSION:
% Replace this draft author/affiliation block with entries of the form
 \author[aff1,aff2]{Kamer Kaya\corref{cor1}}
 \ead{kaya@sabanciuniv.edu}
 \cortext[cor1]{Corresponding author}
 \affiliation[aff1]{organization={Faculty of Engineering and Natural Sciences},
   addressline={Sabanc{\i} University}, city={İstanbul}, postcode={34956},
   state={Tuzla}, country={Türkiye}}
  \affiliation[aff2]{organization={Center of Excellence in Data Analytics (VERİM)},
   addressline={Sabanc{\i} University}, city={İstanbul}, postcode={34956},
   state={Tuzla}, country={Türkiye}
}

\begin{abstract}
Static qubit allocation maps a circuit's logical qubits to a sparse physical
device while minimising an interaction-weighted physical-distance cost function, yielding a
rectangular quadratic assignment problem.  Existing work combines strong
lower bounds with distributed branch-and-bound. We integrate graph-aware exact reductions with an
engineering bundle for a lightweight assignment-bound path:
unavoidable assigned-cost filtering, incrementally maintained root-orbit and prefix-stabilizer symmetry pruning, conditioned parent-\LAP{} screening, and circuit-independent physical device profiles. On 21 relatively easy {\tt Melbourne} instances and six {\tt Boeblingen}
instances completed by the \GLB{} baseline, the final single-thread configuration
provides geometric-mean speedups of \(2.98\times\) and \(13.27\times\),
respectively. With 60 threads on one shared-memory
server, all instances in the final {\tt Boeblingen--Cairo} experiment are certified optimal within half an hour, excluding one-time device-artifact
construction. These results show that graph-aware node processing and engineering the search process substantially reduce the resources required for exact allocation.
\end{abstract}

\begin{keyword}
qubit allocation \sep quadratic assignment \sep branch-and-bound \sep
graph automorphism \sep assignment lower bound 
\end{keyword}

\end{frontmatter}

\section{Introduction}
\label{sec:introduction}

Quantum device connectivity makes the location of logical qubits an important decision. 
Before routing a circuit, an allocator
selects a mapping from logical (soft) qubits into physical (hard) qubits.  
A poor choice
can force long interaction paths and inefficient use of resources; an exact choice is useful as a reference
for heuristic mappers and as a way to understand which device properties
make an instance difficult. The difficulty is combinatorial: assigning
\(n\) logical qubits to \(N\geq n\) physical qubits gives
\(N!/(N-n)!\) possible injections.

Valois \emph{et al.} formulate this problem as a quadratic assignment and solve it with an exact branch-and-bound algorithm using a four-index dual-concentration procedure~\citep{hahn1998,hahn1998bnb} in a distributed framework \citep{valois2026,helbecque2023}.  Their work provides the natural
baseline for this study.  We do not replace its 
problem formulation: instead, we ask an algorithms-and-systems
question: Can information already present in a fixed device graph and in the partial allocations encountered during branch-and-bound safely eliminate work before evaluating expensive bounds?

This study devises a two-part answer to this problem. First, the {\em combinatorial bundle} uses inexpensive graph-aware tests to reduce the work performed before more expensive bounds are evaluated. For instance, the cost already fixed by a partial 
allocation is a valid lower bound and is cheap to update. Furthermore, automorphisms of 
the physical device graph identify equivalent placements, and after each assignment, the subgroup that
still fixes the assigned physical qubits gives additional \emph{dynamic
symmetry}. In addition, one residual assignment certificate can evaluate the parent \LAP{} conditioned on each possible next assignment, allowing the to-be-pruned children detected before their \GLB{} matrices are constructed. Second, because the device graph is fixed across circuits,
an \emph{engineering bundle} reuses node state, assignment
certificates, and device-side data without changing the branch-and-bound tree.  The proposed approach enumerates the binary free-set masks once, stores the distance profile of every candidate vertex in that mask, and reuses the artifact across all circuits. We are not aware of prior exact static-allocation work that materialises every free-set/candidate distance histogram and uses them for constant-time assembly of the \GLB{} row-relaxation term. Our contributions are as follows.
\begin{itemize}
  \item We combine assigned-cost filtering and conditioned
parent-\LAP{} screening with an incremental group-based implementation
of the classical physical-location symmetry reduction, and separately
evaluate its root-only \(S_0\) and prefix-dependent \(S_\ast\) forms in our {\em combinatorial bundle}.
  
  \item As a part of the {\em engineering bundle}, we introduce a circuit-independent device preprocessor. It visits the \(2^N\) physical sets, compresses
  their candidate distance histograms, and gives an exact constant-time
  lookup for the row-relaxation term used in each \GLB{} matrix entry. The final code is publicly available\footnote{\url{https://github.com/kamerkaya/StaticQubitAllocation}}.
  
  \item We evaluate the proposed techniques through an oracle-cutoff ablation, practical heuristic-seeded single-thread runs, strong-scaling runs up to 32 threads, and a 60-thread final experiment.
  
  \item The final configuration on one 60-core shared-memory machine certifies optimality for all 22 instances in the final {\tt Boeblingen--Cairo} experiment within 29.6 minutes, whereas the previous work requires 6,795 seconds for the hardest instance on 64 compute nodes, each with 128 CPU cores, albeit on a different server architecture.
\end{itemize}

The remainder of the paper is organized as follows. \Cref{sec:background}
defines the allocation problem and the search state. \Cref{sec:bounds}
derives the graph-aware operators and the final configurations.
\Cref{sec:experiments} presents the sequential, heuristic, and shared-memory
measurements.  \Cref{sec:related} positions the work against exact
\Cref{sec:conclusion} concludes and outlines the next steps.

\section{Background and notation}
\label{sec:background}

\subsection{Static qubit allocation}

A gate-model \emph{quantum circuit} is read from left to right, with one
horizontal wire for each logical qubit.  A one-qubit gate acts on one wire, whereas a \emph{controlled-NOT (CNOT)} joins a filled control to a target \(\oplus\).  \Cref{fig:circuit-primer} collects this notation and shows how the temporal gate sequence is reduced to the weighted interaction graph used by the allocation model.  Single-qubit gates and final measurements do not enter the interaction matrix.

\begin{figure}[htbp]
\centering
\begin{tikzpicture}[
  x=0.92cm,y=0.78cm,>=Latex,
  wire/.style={draw=qbpnavy,thick},
  gate/.style={draw=qbpnavy,thick,fill=white,minimum size=5.5mm,
    inner sep=0pt,font=\small\bfseries},
  qvertex/.style={circle,draw=qbpnavy,thick,fill=qbpblue!8,
    minimum size=6.5mm,inner sep=0pt,font=\small},
  panel/.style={font=\small\bfseries,text=qbpnavy},
  note/.style={font=\scriptsize,text=qbpgray,align=center}
]
  % (a) A circuit read from left to right.
  \node[panel,anchor=west] at (0.05,6.55) {(a) Logical circuit};
  \foreach \y/\q in {5.75/0,4.95/1,4.15/2}{
    \draw[wire] (0.75,\y)--(7.55,\y);
    \node[font=\small,anchor=east] at (0.62,\y) {$q_\q$};
  }
  \node[gate] at (1.40,5.75) {$H$};
  \fill[qbpnavy] (2.45,5.75) circle (1.6pt);
  \node[font=\large,text=qbpnavy] at (2.45,4.95) {$\oplus$};
  \draw[wire] (2.45,5.75)--(2.45,4.95);
  \fill[qbpnavy] (3.60,4.95) circle (1.6pt);
  \node[font=\large,text=qbpnavy] at (3.60,4.15) {$\oplus$};
  \draw[wire] (3.60,4.95)--(3.60,4.15);
  \fill[qbpnavy] (4.75,5.75) circle (1.6pt);
  \node[font=\large,text=qbpnavy] at (4.75,4.95) {$\oplus$};
  \draw[wire] (4.75,5.75)--(4.75,4.95);
  \foreach \y in {5.75,4.95,4.15}{
    \node[gate,font=\scriptsize] at (6.35,\y) {$M$};
  }
  \draw[-{Latex[length=1.8mm]},qbpgray]
    (5.65,6.28)--(7.35,6.28) node[midway,above,note]{time};
  \node[note] at (4.00,3.64)
    {$H$: one-qubit gate; filled dot--$\oplus$: CNOT; $M$: measurement};

  % (b) The two-qubit gates aggregated into an interaction graph.
  \node[panel] at (1.78,3.10) {(b) Interaction graph};
  \node[qvertex] (g0) at (0.65,1.65) {$q_0$};
  \node[qvertex] (g1) at (1.78,2.25) {$q_1$};
  \node[qvertex] (g2) at (2.91,1.65) {$q_2$};
  \draw[very thick,qbpblue] (g0)--node[above,sloped,font=\scriptsize,
    fill=white,inner sep=1pt] {$2$} (g1);
  \draw[very thick,qbpteal!80!black] (g1)--node[above,sloped,
    font=\scriptsize,fill=white,inner sep=1pt] {$1$} (g2);
  \node[note] at (1.78,0.83) {edge label = gate count};

  % (c) The routing primitive that motivates short physical distances.
  \node[panel] at (5.75,3.10) {(c) Routing primitive};
  \draw[wire] (3.90,2.05)--(4.70,2.05);
  \draw[wire] (3.90,1.25)--(4.70,1.25);
  \draw[very thick,qbporange]
    (4.20,1.95)--(4.40,2.15)
    (4.20,2.15)--(4.40,1.95)
    (4.20,1.15)--(4.40,1.35)
    (4.20,1.35)--(4.40,1.15)
    (4.30,2.00)--(4.30,1.30);
  \node[font=\normalsize] at (4.93,1.65) {$=$};
  \draw[wire] (5.18,2.05)--(7.75,2.05);
  \draw[wire] (5.18,1.25)--(7.75,1.25);
  \foreach \x/\yc/\yt in {
    5.55/2.05/1.25,6.45/1.25/2.05,7.35/2.05/1.25}{
    \fill[qbpnavy] (\x,\yc) circle (1.6pt);
    \node[font=\large,text=qbpnavy] at (\x,\yt) {$\oplus$};
    \draw[wire] (\x,\yc)--(\x,\yt);
  }
  \node[note] at (5.82,0.83) {one SWAP $=$ three adjacent CNOTs};
\end{tikzpicture}
\caption{Circuit information used by static allocation.  (a) Gates are
applied from left to right.  (b) Repeated two-qubit gates are aggregated into
weighted logical edges; single-qubit gates and measurements do not contribute
to \(\Fb\).  (c) A SWAP decomposes into three CNOTs, motivating allocations
that place frequent interaction partners close together.}
\label{fig:circuit-primer}
\end{figure}

Let \({\cal{L}}=\{0,\ldots,n-1\}\) be the set of logical qubits and let
\({\cal{P}}=\{0,\ldots,N-1\}\), with \(N\geq n\), be the physical qubits.  The
device is an undirected graph \(G_\Db=({\cal P},{\cal E})\).  The hop distance between two vertices is denoted
by \(h(p,p')\).  Following the convention used by
Valois~\emph{et al.}, the routing-distance matrix is
\begin{equation}
 {\Db}_{p,p'}=\begin{cases}
 0, & p=p',\\
 h(p,p')-1, & p\ne p'.
 \end{cases}
 \label{eq:device-distance}
\end{equation}
Thus, directly connected physical qubits have distance zero.

Let \(\Fb_{i,j}\geq0\) be the matrix recording how often logical qubits \(i\) and \(j\) in a circuit interact. We use the unordered interaction weight
\(w_{i,j}=\Fb_{i,j}+\Fb_{j,i}\) for \(i \neq j\), and take \(w_{i,i}=0\). Thus a gate is counted once in the $(q_i,q_j)$ label of the circuit, although its count appears in two matrix positions $\Fb_{i,j}$ and $\Fb_{j,i}$. An \emph{allocation} is an injection
\(\pi:{\cal L}\rightarrow {\cal P}\).  Its static distance-surrogate cost is
\begin{equation}
 Q(\pi)=\sum_{0\leq i<j<n}w_{i,j}\Db_{\pi(i),\pi(j)}.
 \label{eq:objective}
\end{equation}
Note that~\eqref{eq:objective} defines the static distance-surrogate problem studied in the literature, e.g.,~\cite{valois2026}. It does not model placement changes or route selection by a dynamic router; heuristic mappers address that richer setting~\citep{li2019sabre}.

\begin{definition}[Static qubit allocation]
\label{def:allocation}
Given a logical interaction matrix \(\Fb\) and a physical distance matrix \(\Db\),
find an allocation \(\pi:{\cal L}\rightarrow {\cal P}\) that minimises
\(Q(\pi)\) in \eqref{eq:objective}.
\end{definition}

The toy logical circuit graph in \Cref{fig:toy} has three
qubits, and its edge labels use raw pair-count convention as
\Cref{fig:circuit-primer}.  The physical device is a four-cycle.  Although deliberately
small, it shows the three ideas that matter later: a partial mapping has
unavoidable cost, the device has symmetries, and a residual assignment can
evaluate a forced next choice.  For example, if a branch assigns
\(q_0\mapsto p_0\) and \(q_1\mapsto p_2\), then since
\(\Db_{p_0,p_2}=1\), the pair contributes
\(2 \times 3 \times 1=6\) to the objective.

\begin{figure}[htbp]
\centering
\scalebox{0.9}{
\begin{tikzpicture}[
  >=Latex,
  logical/.style={circle,draw=qbpnavy,very thick,fill=qbpblue!8,
    minimum size=8.5mm,inner sep=0pt,font=\normalsize},
  physical/.style={circle,draw=qbpgray,thick,fill=white,
    minimum size=8.5mm,inner sep=0pt,font=\normalsize},
  selected/.style={physical,draw=qbpteal!80!black,very thick,fill=qbpteal!14},
  frame/.style={draw=qbpnavy!35,rounded corners=3mm,fill=qbpnavy!2},
  label/.style={font=\normalsize\bfseries,text=qbpnavy},
  note/.style={font=\footnotesize,align=center,text=qbpgray}
]
  \path[frame] (-4.15,0.05) rectangle (4.15,3.05);
  \node[label] at (0,2.73) {(a) Logical interaction graph};

  \node[logical] (q0) at (-2.85,1.43) {\(q_0\)};
  \node[logical] (q1) at (-1.00,2.02) {\(q_1\)};
  \node[logical] (q2) at (-1.37,0.68) {\(q_2\)};
  \draw[very thick,qbpblue] (q0)--node[above,sloped,font=\footnotesize,fill=white,
    inner sep=1pt,text=black] {\(3\)} (q1);
  \draw[very thick,qbpteal!80!black] (q0)--node[below,sloped,font=\footnotesize,
    fill=white,inner sep=1pt,text=black] {\(1\)} (q2);
  \draw[very thick,qbporange] (q1)--node[right,font=\footnotesize,fill=white,
    inner sep=1pt,text=black] {\(1\)} (q2);
  \node[note,text=black] at (1.62,1.27)
    {edge labels \(=\Fb_{i,j}\)};

  \path[frame] (-4.15,-3.22) rectangle (4.15,-0.15);
  \node[label] at (0,-0.48) {(b) Physical four-cycle};

  \node[selected] (p0) at (-3.00,-1.25) {\(p_0\)};
  \node[physical] (p1) at (-1.62,-1.25) {\(p_1\)};
  \node[physical] (p2) at (-1.62,-2.25) {\(p_2\)};
  \node[physical] (p3) at (-3.00,-2.25) {\(p_3\)};
  \draw[very thick,qbpgray]
    (p0)--(p1)
    (p1)--(p2)
    (p2)--(p3)
    (p3)--(p0);
  \node[draw=qbpteal!80!black,rounded corners=2mm,fill=qbpteal!8,
    font=\footnotesize,align=center,inner sep=3pt] (allocation) at (1.58,-1.12)
    {partial allocation\\\(q_0\mapsto p_0\)};
  \draw[-{Latex[length=2mm]},thick,qbpteal!80!black]
    (allocation.west) .. controls (0.45,-0.70) and (-2.20,-0.68) ..
    (p0.north);
  \node[draw=qbporange,rounded corners=2mm,fill=qbporange!8,
    font=\footnotesize,align=center,inner sep=3pt] at (1.62,-2.18)
    {reflection fixes \(p_0,p_2\)\\and swaps \(p_1,p_3\)};
  \node[note] at (0,-2.96)
    {all four vertices are equivalent at the root};
\end{tikzpicture}
}
\caption{Logical edge labels are raw unordered-pair counts.
Toy allocation instance used to introduce the definitions.  The
four-cycle is deliberately symmetric.  At the root, all physical vertices are
equivalent.  Once \(q_0\) is placed on \(p_0\), the remaining reflection
makes \(p_1\) and \(p_3\) equivalent for the next branch.}
\label{fig:toy}
\end{figure}

\subsection{Branch-and-bound state}

\begin{definition}[Partial allocation and completion]
\label{def:partial}
A \emph{partial allocation} maps a subset \({\cal A}\subseteq {\cal L}\) injectively to
physical device qubits. A \emph{completion} extends that map to an allocation of
all logical qubits to the physical ones.  At the corresponding search node,
\({\cal U} = {\cal L}\setminus {\cal A}\) denotes the unassigned logical qubits and
\({\cal R}={\cal P}\setminus\pi({\cal A})\) the free physical qubits.
\end{definition}

In our implementation, the code fixes an order \(i_0,\ldots,i_{n-1}\) of the logical
qubits. Following the traditional branch-and-bound process, a node at depth \(d\) has assigned
\({\cal A}=\{i_0,\ldots,i_{d-1}\}\), and its children place \(i_d\) on each
eligible vertex of \({\cal R}\).  The \emph{incumbent} \(K\) is the cost of the best
complete allocation known so far.  A child is discarded when its lower bound
cannot improve \(K\). The objective of any completion splits into three disjoint edge classes:
\begin{equation}
 Q(\pi)=C_{{\cal A},{\cal A}}(\pi)+C_{{\cal A},{\cal U}}(\pi)+C_{{\cal U},{\cal U}}(\pi).
 \label{eq:objective-partition}
\end{equation}
Only the first term is completely known at a partial node:
\begin{equation}
 C_{{\cal A},{\cal A}}(\pi)=\sum_{\substack{i<j\\i,j\in {\cal A}}}
 w_{i,j}\Db_{\pi(i),\pi(j)}.
 \label{eq:assigned-cost}
\end{equation}

\subsection{Feasible incumbent heuristics}
\label{subsec:incumbent-heuristics}

A lower bound proves that a branch cannot improve the current solution, whereas a feasible complete allocation supplies an \emph{upper bound}.  A
good upper bound can remove many branches before they are explored.  In this work, we use the following three deliberately simple heuristics to generate upper bounds on the exact objective in~\eqref{eq:objective}.

\paragraph{\underline{Multi-start greedy construction}}
The first heuristic fixes a logical priority
\(\rho=(i_0,\ldots,i_{n-1})\), tries every physical vertex as the image of
\(i_0\), and greedily completes each start.  At a partial greedy
mapping, the released code scores a free physical vertex \(p\) with the
one-sided matrix entries
\[
 \Delta_\Fb(i_d,p)=
 \sum_{t<d}\Fb_{i_t,i_d}\Db_{\pi(i_t),p}.
\]
The priority repeatedly removes the smallest-index logical vertex of minimum residual interaction weight and reverses the removal order. The released solver~\citep{valois2026}\footnote{\url{https://github.com/Guillaume-Helbecque/P3D-DFS}} uses the minimum cost over the \(N\) starts as its heuristic upper bound.

\begin{algorithm}[htbp]
\small
\caption{{\sc GreedyAllocation}}
\label{alg:released-greedy}
\begin{algorithmic}[1]
\Require {interaction matrix \(\Fb\), distances \(\Db\), logical
  priority \(\rho=(i_0,\ldots,i_{n-1})\)}
\State \(\pi^\star\gets\varnothing,\quad K\gets\infty\)
\ForAll{\(p_0\in P\)}
  \State start \(\pi\) with \(\pi(i_0)\gets p_0\)
  \For{\(d=1,\ldots,n-1\)}
    \State {\(p\gets
      \arg\min_{q\in P\setminus\pi(\operatorname{dom}\pi)}
      \sum_{t<d}F_{i_t,i_d}\Db_{\pi(i_t),q}\)}
    \State \(\pi(i_d)\gets p\)
  \EndFor
  \If{\(Q(\pi)<K\)}
    \State \((\pi^\star,K)\gets(\pi,Q(\pi))\)
  \EndIf
\EndFor
\State \Return \((\pi^\star,K)\)
\end{algorithmic}
\end{algorithm}

\paragraph{\underline{One-/two-exchange descent}}
Starting from a complete injection, the second heuristic considers two
neighbourhoods. A \emph{one-exchange} relocates one logical qubit to an
unused physical vertex; a \emph{two-exchange} swaps the physical images of
two logical qubits. 
The objective change of either move is computed from
only the incident logical edges.  For a move \(m\) applied to \(\pi\), let
\(\Delta Q(m)=Q(m(\pi))-Q(\pi)\), so a negative value is improving.  We
repeatedly take the best strictly
improving move. Because the set of injections is finite and every accepted
move decreases \(Q\), the procedure terminates at a local optimum.

\begin{algorithm}[htbp]
\small
\caption{{\sc{OneTwoDescent}}}
\label{alg:one-two-descent}
\begin{algorithmic}[1]
\Require a mapping \(\pi\), interaction weights \(w\), distances \(\Db\)
\Repeat
  \State \(m^\star\gets\varnothing,\quad\delta^\star\gets0\)
  \ForAll{valid relocations \(m:i\mapsto p\),
    \(p\in P\setminus\pi(L)\)}
    \If{\(\Delta Q(m)<\delta^\star\)}
      \State \((m^\star,\delta^\star)\gets(m,\Delta Q(m))\)
    \EndIf
  \EndFor
  \ForAll{swaps \(m:i\leftrightarrow j\)}
    \If{\(\Delta Q(m)<\delta^\star\)}
      \State \((m^\star,\delta^\star)\gets(m,\Delta Q(m))\)
    \EndIf
  \EndFor
  \If{\(m^\star\ne\varnothing\)}
    \State apply \(m^\star\) to \(\pi\)
  \EndIf
\Until{\(m^\star=\varnothing\)}
\State \Return \((\pi,Q(\pi))\)
\end{algorithmic}
\end{algorithm}

\paragraph{\underline{Budgeted randomised and iterated local search}}
This heuristic uses a time budget \(\tau\), including the
released greedy construction and its initial local descent.  It then
alternates between two starting mechanisms.  A
\emph{randomised greedy start} slightly perturbs the weighted-degree logical
order, chooses a random first physical vertex, and at every later step
chooses uniformly from a \emph{restricted candidate list} (RCL) containing
up to the three cheapest greedy candidates.  An
\emph{iterated-local-search start} applies between 2--5 random
relocations or swaps to the incumbent.  Both starts are followed by
Algorithm~\ref{alg:one-two-descent}; a descent started before the deadline is
allowed to finish.  The initial descent accepts only improving moves, so its
incumbent is no worse than the released greedy solution; every accepted
update also strictly improves it.  The restricted-candidate-list construction is a GRASP-style mechanism and is used here only to obtain feasible incumbents~\citep{feo1995}.

\begin{algorithm}[htbp]
\small
\caption{{\sc{BudgetedSearch}}}
\label{alg:budgeted-heuristic}
\begin{algorithmic}[1]
\Require {\(\Fb,\Db\), time budget \(\tau\), fixed seed \(s\)}
\State start the elapsed-time clock and seed the RNG with \(s\) 
\State {compute the released logical priority \(\rho\) from \(\Fb\)}
\State {\(w_{i,j}\gets F_{i,j}+F_{j,i}\) for every \(i<j\)}
\State {\((\pi_0,\cdot)\gets\Call{GreedyAllocation}{\Fb,\Db,\rho}\)}
\State \((\pi^\star,\cdot)\gets\Call{OneTwoDescent}{\pi_0,w,\Db}\)
\While{elapsed time \(<\tau\)}
  \If{the iteration number is even}
    \State {\((\pi,\cdot)\gets
      \Call{RandomisedGreedy}{\Fb,\Db,\mathrm{RCL}=3}\)}
  \Else
    \State \(\pi\gets\Call{Perturb}{\pi^\star,2\text{--}5\text{ moves}}\)
  \EndIf
  \State \((\pi,\cdot)\gets\Call{OneTwoDescent}{\pi,w,\Db}\)
  \If{\(Q(\pi)<Q(\pi^\star)\)}
    \State \(\pi^\star\gets\pi\)
  \EndIf
\EndWhile
\State \Return \((\pi^\star,Q(\pi^\star))\)
\end{algorithmic}
\end{algorithm}

\FloatBarrier
\subsection{{Hahn--Grant dual tensor bound (\HHB{})}}
\label{subsec:hhb-background}

Valois \emph{et al.} used this bound for their four-index implementation~\citep{hahn1998,hahn1998bnb,valois2026}\footnote{The four-index Hahn--Grant procedure used is different from the bound of Adams~\emph{et al.}~\citep{adams2007}.}.  At a partial allocation, \({\cal U}={\cal L}\setminus {\cal A}\) is the set of
unassigned logical qubits and \({\cal R}={\cal P}\setminus\pi({\cal A})\) is the set of free
physical qubits. 
Since the original circuit and device have \(n\) and \(N\) vertices,
respectively, \(|{\cal R}|-|{\cal U}|=N-n\geq0\). Let \({\cal Z}\) be a set of
\(|{\cal R}|-|{\cal U}|\) \emph{dummy logical vertices} with no interactions. Let $m$ be the order of the augmented logical set
\(\widetilde {\cal U}={\cal U}\cup {\cal Z}\) making the assignment problem square.

For \(i,k\in\widetilde {\cal U}\) and \(p,q\in {\cal R}\), two tentative assignments
\((i,p)\) and \((k,q)\) are compatible when \(i\neq k\) and \(p\neq q\). \HHB{} retains the fourth-order \emph{tensor}
\[
 \Tb_{i,p,k,q}=\Fb_{i,k}\Db_{p,q}
\]
for every compatible pair, together with an \(m\times m\)
\emph{leader matrix} \(\mathbf L\).  The entry \(\mathbf L_{i,p}\) stores the
current linear cost associated with the tentative assignment \(i\mapsto p\). An \HHB{} iteration (1) {\em distributes} every current leader value over its compatible tensor entries, (2) {\em equalises} complementary entries \(\Tb_{i,p,k,q}\) and \(\Tb_{k,q,i,p}\), (3) solves one Hungarian assignment in each of the \(m^2\) tensor blocks and moves the extracted block minima into \(\mathbf L_{i,p}\); and (4) solves a final Hungarian assignment on \(\mathbf L\) and adds its value to the node lower bound. Additional pruning mechanisms have also been used by Valois \emph{et al.}~\cite{valois2026}.

Each  \HHB{} iteration solves \(m^2\) assignments of order \(m\).  Consequently a full evaluation costs \(\mathcal{O}(I\,m^5)\) for $I$ iterations, stores \(\mathcal{O}(m^4)\) entries, and also pays \(\mathcal{O}(m^4)\) to create a child tensor.  

\subsection{Gilmore--Lawler assignment bound}
\label{subsec:glb-background}

The \emph{Gilmore--Lawler bound} (\GLB{}) replaces the residual quadratic problem by a rectangular linear assignment
\citep{gilmore1962,lawler1963}~(see also Anstreicher~\citep{anstreicher2003}). This is a rectangular \QAP{} setting because fewer logical objects than physical locations may be present~\citep{kaibel1998}. Following the previous subsection,
\({\cal U}\) and \({\cal R}\) are the unassigned logical and free physical sets,
\(u=|{\cal U}|\), and \(r=|{\cal R}|\), with \(u\le r\).  Unlike \HHB{}, \GLB{} does not
add dummy rows; it solves a rectangular \(u\times r\) assignment.  For a
tentative placement \(i\mapsto p\), the interactions from \(i\) to the assigned prefix have the exact cost
\begin{equation}
 \Ab_{i,p}=
 \sum_{a\in {\cal A}}
 \bigl(\Fb_{i,a}\Db_{p,\pi(a)}+\Fb_{a,i}\Db_{\pi(a),p}\bigr).
 \label{eq:glb-au}
\end{equation}
To relax the still-quadratic interactions within \({\cal U}\), it sorts the directed
weights \(\{\Fb_{i,j}:j\in {\cal U}\setminus\{i\}\}\) in nonincreasing order,
giving \(f_i^\downarrow(1),\ldots,f_i^\downarrow(u-1)\), then sorts the
distances \(\{\Db_{p,q}:q\in {\cal R}\setminus\{p\}\}\) in nondecreasing order and
retain the first \(u-1\), giving
\(d_{p,{\cal R}}^\uparrow(1),\ldots,d_{p,{\cal {\cal R}}}^\uparrow(u-1)\).  The rearrangement inequality gives the row cost
\begin{equation}
 \Gb_{i,p}=\Ab_{i,p}+
 \sum_{t=1}^{u-1}f_i^\downarrow(t)d_{p,{\cal R}}^\uparrow(t).
 \label{eq:glb-entry}
\end{equation}
The \GLB{} value is then
\begin{equation}
\small
 B_{\Gb}(\pi|_{\cal A})=C_{{\cal A},{\cal A}}(\pi)+
 \min_{\sigma:{\cal U}\rightarrow {\cal R}}\sum_{i\in {\cal U}}\Gb_{i,\sigma(i)}
 =C_{{\cal A},{\cal A}}(\pi)+\lap(\mathbf G).
 \label{eq:glb}
\end{equation}
Here \(\sigma\) ranges over injections from \({\cal U}\) into \({\cal R}\).  The
minimisation is an exact rectangular assignment and is solved by the
Hungarian algorithm \citep{kuhn1955}. Algorithms specialised to rectangular assignments are discussed in~\citep{bijsterbosch2010}. The quadratic problem has not become
linear: the construction has relaxed the requirement that the row-wise
distance pairings in \eqref{eq:glb-entry} arise from one common placement.

The validity of \eqref{eq:glb} follows from the standard
Gilmore--Lawler argument. For a fixed tentative assignment
\(i\mapsto p\), any completion assigns the other \(u-1\) rows
to \(u-1\) distinct columns of \({\cal R}\setminus\{p\}\). Their sorted
distances are componentwise no smaller than the \(u-1\) smallest
available distances retained in \eqref{eq:glb-entry}. Since the
interaction weights and distances are nonnegative, the
rearrangement inequality pairs descending weights with ascending
distances to obtain a lower bound on the directed interactions
from row \(i\). Summing over \(i\in {\cal U}\) counts every directed
term \(\Fb_{i,j}\) once, and the two directions together give
\(w_{i,j}=\Fb_{i,j}+\Fb_{j,i}\). Hence,
\[
  B_G(\pi|_{\cal A})\leq Q(\widehat{\pi})
\]
for every completion \(\widehat{\pi}\) of the partial allocation.

At one node, matrix assembly takes at most
cubic time in the residual order, and the rectangular Hungarian solve costs \(\mathcal{O}(u^2r)\). Thus the deployed bound is \(\mathcal{O}(N^3)\) in the worst case and uses \(\mathcal{O}(N^2)\) working memory.  This relatively small certificate is why the later conditioned parent-\LAP{} screen can reuse its matching and dual potentials.

\section{Graph-aware bounds and exact reductions}
\label{sec:bounds}

\subsection{Assigned cost filter: $P$}
The \emph{assigned-cost filter} \(P\) tests \eqref{eq:assigned-cost} before a
full residual bound is assembled.  It is especially cheap because assigning
one new logical qubit changes only the edges from that qubit to the existing
prefix.  If \(i\mapsto p\) is the next candidate, the update is
\begin{equation}
 C_{{\cal A}\cup\{i\},{\cal A}\cup\{i\}}=
 C_{{\cal A},{\cal A}}+\sum_{a\in A}w_{a,i}\Db_{\pi(a),p}.
 \label{eq:p-update}
\end{equation}
This costs \(\mathcal{O}(|{\cal A}|)\), after which the bound is available as a comparison.  If \(P\) rejects a candidate, no \GLB{} matrix or
\HHB{} child tensor is constructed. For every completion \(\widehat\pi\) of a node, all terms in \(C_{{\cal A},{\cal A}}\) are already fixed and nonnegative; hence
\[ Q(\widehat\pi)\geq C_{{\cal A},{\cal A}}(\pi). \]
This filter does not estimate an unknown cost; it simply records a cost that has already become unavoidable and will be taken into account soon by the \GLB{}. It therefore provides early work avoidance rather than a stronger retained-tree bound. Its runtime effect is small in our experiments. However, since the test is inexpensive, we retain it in the final configuration. In the toy example above, \(P\) obtains the lower bound \(6\) without solving an assignment problem. 

\subsection{Root and prefix symmetry:
\texorpdfstring{\(S_0\) and \(S_\ast\)}{S0 and S*}}

Physical-location symmetry in \QAP{} branch-and-bound is studied in the literature. Mautor and Roucairol define two free sites at a partial node as equivalent when a distance-preserving permutation maps one to the other while fixing all assigned sites pointwise, and Zhu \emph{et al.} apply this structural-symmetry reduction to the remaining physical qubits in exact qubit allocation~\citep{mautor1994,zhu2020}. 
Valois et~al. stated that the symmetry test of Zhu \emph{et al.} is expensive, as it is performed at every branch-and-bound tree node, and the benefits are highly restricted to the circuit~\citep{valois2026}. Here, our implementation difference is that the device automorphisms are generated and validated once, the active subgroup is carried as node state, and further group work is bypassed after only the identity remains.

An \emph{automorphism} of the device graph is a permutation
\(\gamma:{\cal P}\to {\cal P}\) that preserves all distances:
\(\Db_{\gamma(p),\gamma(r)}=\Db_{p,r}\).  The automorphisms form the group
\(\Gamma=\Aut(\Db)\).  To define automorphisms, every stored permutation is checked against
all \(N^2\) entries of \(\Db\).  For \(p\in {\cal P}\), its root \emph{orbit} is
\([p]_\Gamma=\{\gamma(p):\gamma\in\Gamma\}\).  Placing the first logical
qubit on any two vertices in the same orbit creates isomorphic subproblems.
The reduction \(S_0\), therefore, keeps the first vertex of each orbit in the
solver's fixed physical priority order.

After a prefix has been assigned, many root automorphisms no longer preserve that particular search node. At first, this may suggest that physical symmetry is useful only for the first allocation at the root. The prefix-dependent implementation \(S_\ast\) retains exactly the automorphisms that fix every occupied physical vertex. For a
partial allocation \(\pi|_{\cal A}\), this \emph{pointwise prefix stabilizer} is
\begin{equation}
 \Gamma_{\cal A}=\{\gamma\in\Aut(\Db):
 \gamma(\pi(a))=\pi(a)\text{ for every }a\in {\cal A}\}.
 \label{eq:stabilizer}
\end{equation}
The free candidates are partitioned into the dynamic orbits
\begin{equation}
 [p]_A=\{\gamma(p):\gamma\in\Gamma_{\cal A}\},\qquad p\in {\cal R},
 \label{eq:prefix-orbit}
\end{equation}
where \({\cal R}={\cal P}\setminus\pi({\cal A})\) is the set of currently free physical vertices. To prune equivalent branches, only the first physical-priority representative of each orbit is
branched on.  If the child chooses \(i\mapsto p\), its group is obtained by
the simple recurrence
\begin{equation}
 \Gamma_{{\cal A}\cup\{i\}}=\{\gamma\in\Gamma_{\cal A}:\gamma(p)=p\}.
 \label{eq:stabilizer-update}
\end{equation}
For completeness, the following proposition restates the
standard cost-preserving symmetry argument in the notation of the
present rectangular allocation problem.

\begin{proposition}[Safety of symmetry representatives]
\label{prop:symmetry}
Root-orbit pruning \(S_0\) and prefix-stabilizer pruning \(S_\ast\) preserve
the optimum allocation value.
\end{proposition}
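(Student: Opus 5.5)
The plan is to show that composing an allocation with a device automorphism preserves its cost, and then to lift this to the search tree: every pruned subtree is the image, under some automorphism, of a subtree that is still explored.

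\emph{Step 1 (cost invariance).} For any \(\gamma\in\Aut(\Db)\) and any injection \(\pi\), the map \(\gamma\circ\pi\) is again an injection. Since \(\Db_{\gamma(p),\gamma(r)}=\Db_{p,r}\), each term of \eqref{eq:objective} is unchanged, so
\[
 Q(\gamma\circ\pi)=\sum_{i<j}w_{i,j}\Db_{\gamma(\pi(i)),\gamma(\pi(j))}=Q(\pi).
\]
The same identity applies term by term to the partial cost \(C_{{\cal A},{\cal A}}\) in \eqref{eq:assigned-cost}.

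\emph{Step 2 (the \(S_\ast\) case).} Consider a node with prefix \(\pi|_{\cal A}\) and let \(\gamma\in\Gamma_{\cal A}\) from \eqref{eq:stabilizer}. Any completion \(\widehat\pi\) with \(\widehat\pi(i_d)=q\) is mapped to \(\gamma\circ\widehat\pi\). This image fixes the prefix pointwise, so it is a completion of the same node, and it places \(i_d\) on \(\gamma(q)\). Choose \(\gamma\) so that \(\gamma(q)\) is the priority representative \(p\) of the orbit \([q]_{\cal A}\) in \eqref{eq:prefix-orbit}. Then every completion in a pruned child has a cost-equal counterpart in the kept child \(i_d\mapsto p\).

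The \(S_0\) case is the special case \({\cal A}=\varnothing\), where \(\Gamma_\varnothing=\Gamma\) and only the root orbits \([p]_\Gamma\) are used.

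\emph{Step 3 (combining across depths).} This is the main obstacle. Pruning happens at every depth, and the kept child's subtree is itself pruned further, so a one-level argument is not enough. I would argue by induction on depth, from the leaves upward, with the following claim. For every node \(v\) that is not pruned by symmetry, the minimum of \(Q\) over completions of \(v\) reachable in the pruned tree equals the minimum over all completions of \(v\).

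At \(v\), take an optimal completion \(\widehat\pi\) of \(v\). By Step 2 there is a \(\gamma\in\Gamma_{\cal A}\) sending it to a completion \(\gamma\circ\widehat\pi\) with the same cost that lies in a kept child \(v'\). By the inductive hypothesis applied to \(v'\), the pruned subtree of \(v'\) attains its optimum, which is at most \(Q(\widehat\pi)\).

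For this induction to be sound, the set of symmetry-kept children must depend only on the node's prefix. This holds because the orbits of \(\Gamma_{\cal A}\) on \({\cal R}\) are determined by \(\pi({\cal A})\), and the fixed physical priority order is global. I will also check that the incremental recurrence \eqref{eq:stabilizer-update} indeed produces the pointwise stabilizer of the extended prefix. This is immediate: \(\Gamma_{{\cal A}\cup\{i\}}\) consists of the elements of \(\Gamma_{\cal A}\) that additionally fix \(p\).

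Finally, symmetry pruning must be compatible with the bound-based pruning by \(P\) and \GLB{}. Those prune only nodes whose bound cannot improve \(K\). By Step 1, a kept representative of an optimal completion has cost equal to the optimum. So either its branch survives, or the incumbent already attains the optimum value. In both cases the optimum value is preserved.
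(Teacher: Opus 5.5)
Your proof is correct and follows the same route as the paper: you compose a completion of a discarded child with a stabilizer element $\gamma\in\Gamma_{\cal A}$ that sends the discarded vertex to the retained representative, observe that this fixes the prefix and preserves injectivity and cost, and treat $S_0$ as the case ${\cal A}=\varnothing$. Your leaves-upward induction over depths and your compatibility check with bound-based pruning make explicit what the paper leaves implicit in its closing remark that keeping one child per orbit cannot change the minimum. You also need only the one-directional map into the retained child, whereas the paper additionally uses $\gamma^{-1}$ to obtain a bijection.
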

\begin{proof}
Fix a partial allocation \(\pi|_A\), let \(i\) be the next logical qubit,
and consider two free physical vertices \(p,p'\in {\cal R}\) in the same
\(\Gamma_{\cal A}\)-orbit.  Suppose that \(p\) is the retained representative and
that the branch \(i\mapsto p'\) is discarded.  By the definition of an
orbit, there is a \(\gamma\in\Gamma_{\cal A}\) such that
\(\gamma(p')=p\).

Let \(\widehat\pi\) be any complete allocation extending the discarded
child, and define a new allocation by
\[
 \widehat\pi^\gamma(x):=\gamma(\widehat\pi(x))
 \qquad (x\in {\cal L}).
\]
Because \(\gamma\) is a permutation of the physical vertices,
\(\widehat\pi^\gamma\) remains injective.  Moreover, for every \(a\in {\cal A}\),
\[
 \widehat\pi^\gamma(a)
 =\gamma(\pi(a))=\pi(a),
\]
since \(\gamma\in\Gamma_{\cal A}\) fixes the occupied physical vertices
pointwise.  Thus \(\widehat\pi^\gamma\) extends the same prefix, and
\(\widehat\pi^\gamma(i)=\gamma(p')=p\), so it belongs to the retained
child. Finally, \(\gamma\) preserves every entry of the device-distance matrix.
Consequently,
\begin{align*}
 Q(\widehat\pi^\gamma)
 &=\sum_{x<y}w_{x,y}
   \Db_{\gamma(\widehat\pi(x)),\gamma(\widehat\pi(y))}
 &=\sum_{x<y}w_{x,y}\Db_{\widehat\pi(x),\widehat\pi(y)}
 =Q(\widehat\pi).
\end{align*}
The inverse automorphism \(\gamma^{-1}\in\Gamma_{\cal A}\) gives the reverse map.
Hence, the discarded and retained children have cost-preserving bijective
sets of completions and, therefore, the same minimum completion value.
Keeping one child per \(\Gamma_{\cal A}\)-orbit cannot change the minimum over all
children.  At the root \({\cal A}=\varnothing\), so
\(\Gamma_{\cal A}=\Gamma\), and the same argument proves the safety of \(S_0\).
\end{proof}

For the toy example in~\Cref{fig:dynamic-symmetry}, at the root, the four-cycle is
vertex-transitive, so \(S_0\) may branch only on \(p_0\).  After
\(q_0\mapsto p_0\), the reflection through \(p_0\) and \(p_2\) remains in
\(\Gamma_{\{q_0\}}\).  It exchanges \(p_1\) and \(p_3\), so \(S_\ast\) does not need
to explore both assignments. This illustrates the importance of dynamism: the available symmetry depends on the prefix, not merely on the device.

\begin{figure}[htbp]
\centering
\begin{tikzpicture}[
  >=Latex,
  card/.style={draw=qbpnavy!38,rounded corners=3mm,fill=white},
  free/.style={circle,draw=qbpgray,thick,fill=white,
    minimum size=8.5mm,inner sep=0pt,font=\small},
  allorbit/.style={free,draw=qbpblue,very thick,fill=qbpblue!12},
  occupied/.style={free,draw=qbpnavy,very thick,fill=qbpnavy,
    text=white},
  pair/.style={free,draw=qbpteal!80!black,very thick,fill=qbpteal!14},
  singleton/.style={free,draw=qbporange,very thick,fill=qbporange!13},
  edge/.style={very thick,qbpgray},
  note/.style={font=\footnotesize,align=center,text=qbpgray,
    text width=36mm},
  widenote/.style={note,text width=40mm},
  title/.style={font=\normalsize\bfseries,text=qbpnavy}
]
  \path[card,fill=qbpblue!2] (-4.05,0.15) rectangle (-0.10,4.25);
  \path[card,fill=qbpteal!2] (0.10,0.15) rectangle (4.05,4.25);
  \path[card,fill=qbporange!2] (-4.05,-2.85) rectangle (4.05,-0.05);
  \node[title] at (-2.08,3.90) {(a) Root: \(S_0\)};
  \node[title] at (2.08,3.90) {(b) Prefix: \(S_\ast\)};
  \node[title] at (0,-0.40) {(c) Identity reached};

  \node[allorbit] (a0) at (-2.78,2.95) {\(p_0\)};
  \node[allorbit] (a1) at (-1.38,2.95) {\(p_1\)};
  \node[allorbit] (a2) at (-1.38,1.70) {\(p_2\)};
  \node[allorbit] (a3) at (-2.78,1.70) {\(p_3\)};
  \draw[edge] (a0)--(a1);
  \draw[edge] (a1)--(a2);
  \draw[edge] (a2)--(a3);
  \draw[edge] (a3)--(a0);
  \node[note] at (-2.08,0.72)
    {\(\Gamma=\Aut(C_4)\)\\one orbit \(\{p_0,p_1,p_2,p_3\}\)\\keep \(p_0\)};

  \node[occupied] (b0) at (1.38,2.95) {\(p_0\)};
  \node[pair] (b1) at (2.78,2.95) {\(p_1\)};
  \node[singleton] (b2) at (2.78,1.70) {\(p_2\)};
  \node[pair] (b3) at (1.38,1.70) {\(p_3\)};
  \draw[edge] (b0)--(b1);
  \draw[edge] (b1)--(b2);
  \draw[edge] (b2)--(b3);
  \draw[edge] (b3)--(b0);
  \draw[dashed,very thick,qbpteal] (b0)--(b2);
  \node[note] at (2.08,0.72)
    {\(q_0\mapsto p_0\)\\orbits \(\{p_1,p_3\}\), \(\{p_2\}\)\\keep \(p_1,p_2\)};

  \node[occupied] (c0) at (-2.85,-1.15) {\(p_0\)};
  \node[occupied,fill=qbpteal!80!black,draw=qbpteal!80!black]
    (c1) at (-1.45,-1.15) {\(p_1\)};
  \node[singleton] (c2) at (-1.45,-2.35) {\(p_2\)};
  \node[allorbit] (c3) at (-2.85,-2.35) {\(p_3\)};
  \draw[edge] (c0)--(c1);
  \draw[edge] (c1)--(c2);
  \draw[edge] (c2)--(c3);
  \draw[edge] (c3)--(c0);
  \node[widenote] at (1.55,-1.70)
    {\(q_0\mapsto p_0,\ q_1\mapsto p_1\)\\
     \(\Gamma_{\cal A}=\{\mathrm{id}\}\)\\all free orbits are singletons};
\end{tikzpicture}
\caption{Dynamic physical symmetry on the toy four-cycle.  Root symmetry
keeps one representative of the full orbit.  Fixing \(p_0\) leaves a
reflection that exchanges \(p_1\) and \(p_3\).  Fixing \(p_1\) as well leaves
only the identity, after which \(S_\ast\) bypasses further group work.}
\label{fig:dynamic-symmetry}
\end{figure}

\subsection{{Conditioned parent-\LAP{} screening: \(L\)}}

The conditioned parent-\LAP{} screen \(L\) uses the parent's residual assignment certificate to reject candidates before a child matrix is constructed. It computes the exact optimum of the parent \LAP{} relaxation conditioned on the next assignment; it is not stronger than the \GLB{} rebuilt at that child.  Let the parent \GLB{} matrix be
\(\mathbf G\), and let its Hungarian solve return:
\begin{itemize}[noitemsep]
  \item an optimal row-to-column injection \(M\) of value
  \(\lambda=\lap(\mathbf G)\);
  \item feasible row and column dual potentials \(\alpha_i,\beta_p\).
\end{itemize}
We use the standard rectangular-\LAP{} dual
\[
 \max\left\{\sum_{i\in {\cal U}}\alpha_i+\sum_{p\in {\cal R}}\beta_p:
 \alpha_i+\beta_p\leq \Gb_{i,p},\ \beta_p\leq0\right\}.
\]
At an optimal primal--dual pair, every matched edge is tight and every column
unused by \(M\) has \(\beta_p=0\). The \emph{reduced cost}
\begin{equation}
 \overline \Gb_{i,p}=\Gb_{i,p}-\alpha_i-\beta_p
 \label{eq:reduced-cost}
\end{equation}
is nonnegative and is zero on every matched edge.  

Let \({\cal R}_M=M({\cal U})\) be the columns used by the optimal injection, and let
\({\cal R}_0={\cal R}\setminus {\cal R}_M\). The directed column residual graph
\(W(M)\) has a vertex set \({\cal R}\), together with one vertex \(\xi\)
when \({\cal R}_0\neq\varnothing\). Its edges are defined as follows.
\begin{itemize}[noitemsep]
  \item For every matched column \(q=M(k)\in {\cal R}_M\) and every \(s\in {\cal R}\),
  add the edge \(q\to s\) of weight \(\overline\Gb_{k,s}\). It represents
  reassigning row \(k\) from \(q\) to \(s\).
  \item When \(R_0\neq\varnothing\), add \(q\to\xi\) with weight zero for
  every \(q\in {\cal R}_0\), and add \(\xi\to s\) with weight \(-\beta_s\) for
  every \(s\in {\cal R}\). These edges contract the dummy-row transitions that
  change which physical column is unused.
\end{itemize}
All edge weights are nonnegative. Forcing row \(i\) to column \(p\) contributes
the reduced cost \(\overline\Gb_{i,p}\) and opens an alternating reassignment
path from \(p\) to the old column \(M(i)\). The exact increase within the
parent \LAP{} relaxation is therefore
\begin{equation}
 \rho_{i,p}=\overline \Gb_{i,p}+
 \operatorname{dist}_{W(M)}(p,M(i)),
 \qquad
 \lambda_{i,p}=\lambda+\rho_{i,p}.
 \label{eq:forced-lap}
\end{equation}
The dummy vertex represents changing which surplus column is unused.  One
shortest-path pass on the reversed residual graph prices every physical
candidate for the next logical row; solving a fresh Hungarian problem for
each child is unnecessary. This use of alternating paths and reduced costs follows classical assignment-sensitivity and ranked-matching principles~\citep{chegireddy1987}. The three subfigures of \Cref{fig:forced-lap} follow this calculation: (a) shows the parent matching and forced edge,
(b) shows the alternating return that repairs the matching, and
(c) converts its penalty into the conditioned parent-\LAP{} bound and screening decision.

\begin{figure}[htbp]
\centering
\begin{tikzpicture}[
  >=Latex,
  card/.style={draw=qbpnavy!38,rounded corners=2.5mm,fill=white},
  rnode/.style={circle,draw=qbpnavy,thick,fill=qbpblue!7,
    minimum size=6.8mm,inner sep=0pt,font=\scriptsize},
  cnode/.style={circle,draw=qbpteal!80!black,thick,fill=qbpteal!8,
    minimum size=6.8mm,inner sep=0pt,font=\scriptsize},
  match/.style={very thick,qbpteal!80!black},
  force/.style={-{Latex[length=1.8mm]},very thick,qbporange},
  residual/.style={-{Latex[length=1.8mm]},thick,qbpblue},
  box/.style={draw=qbpnavy,rounded corners=2mm,fill=white,
    minimum width=24mm,minimum height=14mm,inner sep=1.5pt,
    align=center,font=\scriptsize},
  note/.style={font=\scriptsize,align=center,text=qbpgray,text width=35mm},
  title/.style={font=\footnotesize\bfseries,text=qbpnavy}
]
  \path[card,fill=qbpteal!2] (-4.05,0.15) rectangle (-0.10,4.55);
  \path[card,fill=qbpblue!2] (0.10,0.15) rectangle (4.05,4.55);
  \path[card,fill=qbporange!2] (-4.05,-2.25) rectangle (4.05,-0.05);
  \node[title] at (-2.08,4.22) {(a) Parent certificate};
  \node[title] at (2.08,4.22) {(b) Alternating return};
  \node[title] at (0,-0.37) {(c) Child decisions};

  \node[rnode] (r1) at (-3.48,3.25) {\(i\)};
  \node[rnode] (r2) at (-3.48,2.43) {\(k\)};
  \node[rnode] (r3) at (-3.48,1.61) {\(\ell\)};
  \node[cnode] (p1) at (-0.67,3.55) {\(M(i)\)};
  \node[cnode] (p2) at (-0.67,2.75) {\(p\)};
  \node[cnode] (p3) at (-0.67,1.95) {\(q\)};
  \node[cnode,double] (p4) at (-0.67,1.15) {\(\xi\)};
  \draw[match] (r1)--(p1);
  \draw[match] (r2)--(p2);
  \draw[match] (r3)--(p3);
  \draw[force] (r1)--node[above,sloped,font=\scriptsize,
    fill=qbpteal!2,inner sep=1pt] {force} (p2);
  \node[note] at (-2.08,0.52)
    {green edges form \(M\); \(\xi\) is an artificial vertex};

  \node[cnode,draw=qbporange,fill=qbporange!10] (fp) at (0.75,3.25)
    {\(p\)};
  \node[cnode] (fz) at (2.08,2.55) {\(q\)};
  \node[cnode] (fm) at (3.40,3.25) {\(M(i)\)};
  \node[cnode,double,draw=qbpgray] (fd) at (2.08,1.30) {\(\xi\)};
  \draw[residual] (fp) to[bend left=13]
    node[above,sloped,font=\scriptsize] {reassign} (fz);
  \draw[residual] (fz) to[bend left=13]
    node[above,sloped,font=\scriptsize] {return} (fm);
  \draw[dashed,thick,qbpgray,-{Latex[length=1.8mm]}] (fp)--(fd);
  \draw[dashed,thick,qbpgray,-{Latex[length=1.8mm]}] (fd)--(fm);
  \node[note] at (2.08,0.52)
    {the shortest \(p\leadsto M(i)\) path repairs the matching};

  \node[box,draw=qbporange] (rho) at (-2.68,-1.36)
    {\(\rho_{i,p}=\overline \Gb_{i,p}\)\\
     \({}+\operatorname{dist}_{W}(p,M(i))\)};
  \node[box,draw=qbpblue,fill=qbpblue!7] (child) at (0,-1.36)
    {\(B_{L}(i\mapsto p)=\)\\
     \(C_{{\cal A},{\cal A}}+\lambda+\rho_{i,p}\)};
  \node[box,draw=qbpred,fill=qbpred!6] (prune) at (2.68,-1.36)
    {discard when \\\(B_{L}(i\mapsto p)\)\\cannot improve \(K\)};
  \draw[-{Latex[length=1.8mm]},thick,qbpgray]
    (rho.east)--(child.west);
  \draw[-{Latex[length=1.8mm]},thick,qbpgray]
    (child.east)--(prune.west);
\end{tikzpicture}
\caption{Conditioned parent-\LAP{} screening: The parent matching and duals define a residual graph.
Forcing one edge requires a cheapest alternating return to the old matched
column; its cost is the penalty within the parent \LAP{} relaxation.}
\label{fig:forced-lap}
\end{figure}

The exactness of \eqref{eq:forced-lap} follows from the standard
symmetric-difference argument for assignment matchings. Add \(r-u\)
zero-cost dummy rows so that \(M\) extends to a perfect matching. Replacing
\(i\mapsto M(i)\) by the forced edge \(i\mapsto p\) displaces the row
currently using \(p\), and feasibility is restored by an alternating
reassignment from \(p\) to \(M(i)\); transitions through \(\xi\) cover the
case in which the identity of an unused column changes. The cost of this
reassignment is exactly the sum of the corresponding reduced-cost edges.
Conversely, the symmetric difference between \(M\) and any assignment
\(\sigma\) satisfying \(\sigma(i)=p\) consists of this alternating component
and possibly additional alternating cycles. Since all residual-graph edge
weights are nonnegative, the additional cycles cannot improve the
conditioned assignment and may be discarded. Therefore,
\[
 \lambda_{i,p}
 =
 \min_{\substack{\sigma:{\cal U}\hookrightarrow {\cal R}\\ \sigma(i)=p}}
 \sum_{k\in {\cal U}}\Gb_{k,\sigma(k)}
 =
 \lambda+\overline\Gb_{i,p}
 +\operatorname{dist}_{W(M)}(p,M(i)).
\]
Consequently,
\[
  B_{L}(i,p)
  :=
  C_{{\cal A},{\cal A}}+\lambda_{i,p}
\]
is a valid lower bound for every completion of the child
\(i\mapsto p\): the conditioned parent LAP remains a relaxation
of that child's residual quadratic allocation problem.

\begin{proposition}[Dominance over assigned-cost screening]
\label{prop:l-dominates-p}
For every candidate \(i\mapsto p\), let
\[
 B_{P}(i,p)=C_{{\cal A},{\cal A}}+\Ab_{i,p}
\]
be the bound tested by \(P\).  Then
\[
 B_{P}(i,p)\leq B_{L}(i,p).
\]
Therefore, at any depth where \(L\) is active, a candidate that survives the
\(L\) test cannot subsequently be rejected by the \(P\) comparison under
either the strict or non-strict incumbent rule.
\end{proposition}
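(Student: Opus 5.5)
The plan is to compare the two bounds term by term. Both share the summand \(C_{{\cal A},{\cal A}}\), so it suffices to show \(\Ab_{i,p}\le\lambda_{i,p}\). I will first check that \(\Ab_{i,p}\) is exactly the increment tested by \(P\). By \eqref{eq:glb-au} and \(w_{a,i}=\Fb_{a,i}+\Fb_{i,a}\), together with the symmetry of \(\Db\),
\[
\Ab_{i,p}=\sum_{a\in{\cal A}}w_{a,i}\Db_{\pi(a),p}.
\]
This is the increment in \eqref{eq:p-update}. Hence \(B_P(i,p)\) is the updated assigned cost \(C_{{\cal A}\cup\{i\},{\cal A}\cup\{i\}}\).

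Next I will record that every entry of the parent matrix \(\Gb\) is nonnegative and dominates its \(\Ab\)-part. The device distances satisfy \(\Db_{p,q}\ge 0\), since \(h(p,q)\ge1\) for \(p\ne q\) in \eqref{eq:device-distance}, and the weights satisfy \(\Fb\ge0\). Therefore \(\Ab_{k,s}\ge0\), and the rearrangement term in \eqref{eq:glb-entry} is a sum of products of nonnegative numbers. This gives
\[
\Gb_{k,s}\ge\Ab_{k,s}\ge0 \quad\text{for all } k\in{\cal U},\ s\in{\cal R}.
\]

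The key step uses the exact characterisation of \(\lambda_{i,p}\) established just before the proposition. It is the minimum of \(\sum_{k\in{\cal U}}\Gb_{k,\sigma(k)}\) over injections \(\sigma\) with \(\sigma(i)=p\). For any such \(\sigma\), I will drop all rows \(k\ne i\), which is allowed because their entries are nonnegative, and obtain
\[
\sum_{k}\Gb_{k,\sigma(k)}\ge\Gb_{i,p}\ge\Ab_{i,p}.
\]
Taking the minimum over \(\sigma\) yields \(\lambda_{i,p}\ge\Ab_{i,p}\), and adding \(C_{{\cal A},{\cal A}}\) gives \(B_P(i,p)\le B_L(i,p)\).

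Finally, the pruning consequence is monotonicity of the comparison. Under the strict rule a candidate survives \(L\) when \(B_L<K\); under the non-strict rule it survives when \(B_L\le K\), or whichever side is used. In either case \(B_P\le B_L\) gives the same relation for \(B_P\), so \(P\) cannot reject a candidate that \(L\) kept.

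The only delicate point is relying on the equality form of \(\lambda_{i,p}\) rather than on the residual-graph formula \eqref{eq:forced-lap}. I will use the minimum-over-\(\sigma\) form stated above, so no shortest-path reasoning is needed.
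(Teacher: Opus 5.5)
Your proof is correct and follows essentially the same route as the paper. Both arguments bound \(\lambda_{i,p}\) below by the forced entry \(\Gb_{i,p}=\Ab_{i,p}+g_i^{{\cal U},{\cal R}}(p)\geq\Ab_{i,p}\), discard the remaining nonnegative entries of the conditioned assignment, and add \(C_{{\cal A},{\cal A}}\). Your additional checks are harmless details that the paper leaves implicit: that \(\Ab_{i,p}\) coincides with the increment in \eqref{eq:p-update} (using the symmetry of \(\Db\)), and that the pruning consequence follows by monotonicity under both the strict and non-strict rules.
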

\begin{proof}
By \eqref{eq:glb-entry}, the forced row contributes
\(\Gb_{i,p}=\Ab_{i,p}+g_i^{{\cal U},{\cal R}}(p)\geq\Ab_{i,p}\), because interactions and
distances are nonnegative.  Every other entry selected by the forced assignment is also nonnegative.  Hence
\(\lambda_{i,p}\geq\Ab_{i,p}\); adding \(C_{{\cal A},{\cal A}}\) gives the result.
\end{proof}

Due to this dominance, we integrated it to the proposed solver as follows: The vector \(\{B_L(i,p):p\in {\cal R}\}\) is prepared once per parent, or carried by
\(\mathrm{CERT}\)~(of the engineering bundle described below) from the parent's child-bound computation. Its
per-candidate test is therefore one array access and one comparison.  Testing
it before \(P\) avoids the \(\mathcal O(|{\cal A}|)\) assigned-cost update for every
candidate rejected by \(L\).  For an \(L\) survivor, the implementation still
computes \(B_P\), because that exact assigned cost is stored in the child and
is the fixed-cost term of the child \GLB{}; only the subsequent \(P\) comparison is redundant.  

Although valid, this conditioned parent-LAP bound is also no stronger
than the GLB obtained by constructing and bounding the child.
Let
\[
 g_i^{{\cal U},{\cal R}}(p)=\sum_{t=1}^{|{\cal U}|-1}
 f_i^{{\cal U},\downarrow}(t)d_{p,{\cal R}}^{\uparrow}(t)
\]
be the row-relaxation term in \eqref{eq:glb-entry}, and let
\(B_{G}^{\mathrm{child}}(i,p)\) denote the \GLB{} obtained after actually
creating the child \(i\mapsto p\) and rebuilding its residual matrix.

\begin{proposition}[Dominance by the rebuilt child \GLB{}]
\label{prop:l-dominated}
For every child \(i\mapsto p\),
\[
 {B_L(i,p)\leq B_{G}^{\mathrm{child}}(i,p)}.
\]
Consequently, at a fixed cutoff and with the same incumbent sequence, \(L\)
cannot remove a node that the rebuilt child \GLB{} would retain; its benefit
is avoiding the construction and solution of child bounds that will be
rejected.
\end{proposition}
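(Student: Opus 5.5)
The plan is to compare the two bounds through a single feasible injection. I will take an optimal assignment of the rebuilt child \LAP{} and extend it with the forced edge \(i\mapsto p\). This gives a feasible injection for the conditioned parent \LAP{}, and I will show that its parent cost is at most the child's total. The inequality \(B_L(i,p)\le B_G^{\mathrm{child}}(i,p)\) then follows from the minimisation formula for \(\lambda_{i,p}\) displayed just before \Cref{prop:l-dominates-p}.

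\emph{Setup.} Write \({\cal A}'={\cal A}\cup\{i\}\), \({\cal U}'={\cal U}\setminus\{i\}\) and \({\cal R}'={\cal R}\setminus\{p\}\). By \eqref{eq:p-update} and \eqref{eq:glb-au}, using \(w_{a,i}=\Fb_{a,i}+\Fb_{i,a}\) and the symmetry of \(\Db\), we have \(C_{{\cal A}',{\cal A}'}=C_{{\cal A},{\cal A}}+\Ab_{i,p}\). Let \(\Gb'\) be the child matrix and \(\sigma'\) an optimal injection \({\cal U}'\to{\cal R}'\) for it, so that \(B_G^{\mathrm{child}}=C_{{\cal A},{\cal A}}+\Ab_{i,p}+\sum_{k\in{\cal U}'}\Gb'_{k,\sigma'(k)}\). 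For \(k\in{\cal U}'\) and \(q\in{\cal R}'\), the child prefix term is
\[
\Ab'_{k,q}=\Ab_{k,q}+\Fb_{k,i}\Db_{q,p}+\Fb_{i,k}\Db_{p,q},
\]
and the child row term \(g_k^{{\cal U}',{\cal R}'}(q)\) pairs the \(u-2\) weights \(\{\Fb_{k,j}:j\in{\cal U}'\setminus\{k\}\}\) with the smallest \(u-2\) distances from \(q\) to \({\cal R}'\setminus\{q\}\).

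\emph{Key steps.} Set \(\sigma=\sigma'\cup\{i\mapsto p\}\), an injection \({\cal U}\to{\cal R}\) with \(\sigma(i)=p\). Then \(\lambda_{i,p}\le\Gb_{i,p}+\sum_{k\in{\cal U}'}\Gb_{k,\sigma'(k)}\), so it suffices to prove
\[
\Gb_{i,p}+\sum_{k\in{\cal U}'}\Gb_{k,\sigma'(k)}\le \Ab_{i,p}+\sum_{k\in{\cal U}'}\Gb'_{k,\sigma'(k)}.
\]
Expand both sides and cancel the common terms \(\Ab_{i,p}\) and \(\Ab_{k,\sigma'(k)}\). What remains splits into two rearrangement claims.

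The first claim concerns row \(i\): \(g_i^{{\cal U},{\cal R}}(p)\le\sum_{k\in{\cal U}'}\Fb_{i,k}\Db_{p,\sigma'(k)}\). The right-hand side pairs all \(u-1\) weights of row \(i\) with \(u-1\) distinct columns of \({\cal R}\setminus\{p\}\). By the Gilmore--Lawler argument of \Cref{subsec:glb-background}, any such pairing is at least the sorted pairing, so the claim holds.

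The second claim concerns each \(k\in{\cal U}'\) with \(q=\sigma'(k)\): \(g_k^{{\cal U},{\cal R}}(q)\le \Fb_{k,i}\Db_{q,p}+g_k^{{\cal U}',{\cal R}'}(q)\). The child's sorted pairing, augmented with the pair \((\Fb_{k,i},\Db_{q,p})\), is a pairing of the full parent weight list \(\{\Fb_{k,j}:j\in{\cal U}\setminus\{k\}\}\) with \(u-1\) distinct distances from \(q\) to \({\cal R}\setminus\{q\}\). Here \(p\ne q\) and \(p\notin{\cal R}'\), which is what guarantees the distances are distinct. Again by the rearrangement and minimality argument, this sum dominates the parent row term.

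Summing the two claims gives the displayed inequality, and adding \(C_{{\cal A},{\cal A}}\) to both sides finishes the proof. The consequence stated in the proposition is then immediate: with the same cutoff and the same incumbent, any child that \(L\) rejects would also be rejected by its rebuilt \GLB{}.

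\emph{Main obstacle.} The delicate point is the bookkeeping of the two directed interactions between \(i\) and \(k\). The child absorbs both \(\Fb_{k,i}\) and \(\Fb_{i,k}\) into its exact prefix cost, whereas the parent relaxes them inside two different rows. So \(\Fb_{i,k}\Db_{p,q}\) must be charged to row \(i\) and \(\Fb_{k,i}\Db_{q,p}\) to row \(k\). This charge is only legitimate because the parent row terms are minima over arbitrary distinct-column pairings. I would also check the edge case \(u=1\) separately: there all row terms are empty sums and equality holds.
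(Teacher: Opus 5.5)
Your proposal is correct and follows the paper's proof essentially step for step. It uses the same extension of a child injection by the forced edge \(i\mapsto p\), the same two row-wise rearrangement inequalities (row \(i\) absorbs \(\Fb_{i,k}\Db_{p,\sigma(k)}\) and each row \(k\) absorbs \(\Fb_{k,i}\Db_{\sigma(k),p}\)), and the same cancellation of the assigned--unassigned terms; the only cosmetic difference is that you take \(\sigma'\) optimal for the child \LAP{} directly, whereas the paper fixes an arbitrary \(\sigma\) and minimises both sides at the end, which is equivalent.
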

\begin{proof}
Let \({\cal U}'={\cal U}\setminus\{i\}\), \({\cal R}'={\cal R}\setminus\{p\}\), and fix any injection
\(\sigma:{\cal U}'\to {\cal R}'\). For the forced row, the columns \(\sigma({\cal U}')\) give one feasible selection and pairing in the parent row relaxation, so
\[
 g_i^{{\cal U},{\cal R}}(p)\leq
 \sum_{k\in {\cal U}'}\Fb_{i,k}\Db_{p,\sigma(k)}.
\]
For each remaining row \(k\in {\cal U}'\), pairing the weight \(\Fb_{k,i}\) with the fixed distance \(\Db_{\sigma(k),p}\) and using the child row relaxation for the other weights gives a feasible pairing for the parent row relaxation. Hence,
\[
 g_k^{{\cal U},{\cal R}}(\sigma(k))\leq
 \Fb_{k,i}\Db_{\sigma(k),p}
 +g_k^{{\cal U}',{\cal R}'}(\sigma(k)).
\]
Moreover, assigning \(i\mapsto p\) changes the fixed cost by
\(\Ab_{i,p}\), while each child assigned--unassigned entry adds
\(\Fb_{k,i}\Db_{\sigma(k),p}+\Fb_{i,k}\Db_{p,\sigma(k)}\).
Summing the displayed inequalities therefore shows, for every \(\sigma\),
that the conditioned parent-\LAP{} objective is no larger than the rebuilt
child-\GLB{} objective. Minimising both sides over \(\sigma\) proves the
claim.
\end{proof}

For a toy example, suppose the residual matrix has rows
\(q_1,q_2\), columns \(p_1,p_2,p_3\), and values
\[
 \mathbf{C}=\begin{pmatrix}0&5&0\\2&0&2\end{pmatrix}.
\]
Its unforced value is \(0\), attained by matching
\(q_1\) to \(p_1\) and \(q_2\) to \(p_2\).  If the next branch forces
\(q_1\mapsto p_2\), the best remaining choice costs \(5+2=7\). Thus, the
single parent certificate identifies a conditioned parent-\LAP{} penalty of \(7\) for
that child.  
In our preliminary experiments, we observed that $L$ is useful when the extra certificate work is amortised over
many candidate children. Hence, in the solver, we apply
it only through depth seven.

\subsection{The engineering bundle: \texorpdfstring{\(E\)}{E}}

The final configuration contains an \emph{engineering bundle}, denoted
\(E\), in addition to \(P\), \(S_{0/\ast}\), and \(L\).  In this work, we deliberately separate it from the combinatorial bundle, because it preserves the same lower bounds and child decisions for a fixed traversal order and incumbent sequence. That is, it does not provide additional pruning. This bundle has five parts implemented on top of the combinatorial bundle:
\[
 E=\mathrm{INC}+\mathrm{DDOT}+\mathrm{PAU}+
   \mathrm{CERT}+\mathrm{TINY}.
\]
%They are described separately because their setup costs and reuse scopes are different.

\subsubsection{Incremental node state
  \texorpdfstring{(\(\mathrm{INC}\))}{(INC)}}

Every node carries the already assigned cost \(C_{{\cal A},{\cal A}}\), the free-set bit
mask \(M_{\cal R}\), and the active automorphism mask \(M_{\cal A}\).  For a child
\(i\mapsto p\), \eqref{eq:p-update} updates the first value,
\(M_{\cal R}\mathrel{\&}=\mathord\sim(1\ll p)\) updates the second, and the
intersection in \eqref{eq:stabilizer-update} updates the third.  This removes
repeated scans of the full partial mapping.  The stored values are exact
state, not lower-bound approximations.

\subsubsection{Reusable device profiles and exact dots
  \texorpdfstring{(\(\mathrm{DDOT}\))}{(DDOT)}}

The expensive part of \eqref{eq:glb-entry} repeatedly asks the same
device-only question: for a free set \({\cal R}\) and candidate \(p\), how many
vertices occur at each routing distance?  Define the \emph{distance profile}
\begin{equation}
 \nu_{{\cal R},p}(\delta)=
 \bigl|\{q\in {\cal R}\setminus\{p\}:\Db_{p,q}=\delta\}\bigr|.
 \label{eq:distance-profile}
\end{equation}
This histogram determines the sorted vector
\(d^\uparrow_{p,{\cal R}}\) without sorting vertex identifiers.  
When the memory budget permits, this is an intentionally heavy, exhaustive device-compilation stage.  Every integer mask \(m\in\{0,\ldots,2^N-1\}\) represents one possible
free physical set.  For every set bit \(p\) in \(m\), the preprocessor evaluates
\(\nu_{{\cal R},p}\) and replaces identical histograms of the same cardinality by
one compact profile identifier.

\begin{figure*}[htbp]
\centering
\begin{tikzpicture}[
  >=Latex,
  card/.style={draw=qbpnavy!38,rounded corners=3mm,fill=white},
  free/.style={circle,draw=qbpteal!80!black,very thick,fill=qbpteal!14,
    minimum size=6mm,inner sep=0pt},
  used/.style={circle,draw=qbpgray,thick,fill=qbpgray!14,
    minimum size=6mm,inner sep=0pt,font=\scriptsize},
  candidate/.style={circle,draw=qbpnavy,very thick,fill=qbpnavy,
    text=white,minimum size=7mm,inner sep=0pt,font=\scriptsize},
  tile/.style={draw=qbpblue,rounded corners=2mm,fill=qbpblue!8,
    minimum width=8mm,minimum height=8mm,font=\scriptsize},
  dtile/.style={tile,draw=qbporange,fill=qbporange!10},
  title/.style={font=\small\bfseries,text=qbpnavy},
  note/.style={font=\scriptsize,align=center,text=qbpgray,text width=42mm}
]
  \path[card,fill=qbpteal!2] (-7.35,-2.35) rectangle (-2.55,2.35);
  \path[card,fill=qbpblue!2] (-2.30,-2.35) rectangle (2.35,2.35);
  \path[card,fill=qbporange!2] (2.60,-2.35) rectangle (7.35,2.35);
  \node[title] at (-4.95,2.00) {(a) All-subset cache};
  \node[title] at (0.02,2.00) {(b) Per-circuit dots};
  \node[title] at (4.98,2.00) {(c) Node hot path};

  \begin{scope}[shift={(-4.95,0.30)}]
    \fill[qbpblue!7] (0,0) circle (1.36);
    \fill[qbporange!11] (0,0) circle (0.72);
    \draw[dashed,thick,qbpblue!65] (0,0) circle (1.36);
    \draw[dashed,thick,qbporange!75] (0,0) circle (0.72);
    \node[candidate] (cp) at (0,0) {\(p\)};
    \node[free] (f1) at (-0.55,0.22) {};
    \node[free] (f2) at (0.48,-0.35) {};
    \node[used] (x1) at (0.20,0.62) {\(\times\)};
    \node[free] (f3) at (-0.92,0.76) {};
    \node[used] (x2) at (1.05,-0.55) {\(\times\)};
    \draw[thick,qbpgray] (cp)--(f1) (cp)--(f2) (cp)--(x1);
    \draw[thick,qbpgray] (f1)--(f3) (f2)--(x2);
  \end{scope}
  \node[note] at (-4.95,-1.50)
    {\(\nu_{{\cal R},p}=(2,1,\ldots)\mapsto\rho\)\\
     \(\rho=I[o[M_R]+\operatorname{rank}_{M_{\cal R}}(p)]\)};
  \node[note] at (-4.95,-2.08)
    {all \(2^N\) masks; device only};

  \node[note,text=qbpnavy] at (0.02,1.45)
    {row \(i\): descending remaining interactions};
  \foreach \x/\v in {0/8,1/5,2/3} {
    \node[tile] (w\x) at (-0.90+0.90*\x,0.74) {\(\v\)};
  }
  \node[note] at (0.02,0.02)
    {\(f_i^\downarrow=(8,5,3)\)\\online, circuit-dependent data};
  \node[draw=qbpnavy!55,rounded corners=2mm,fill=white,
    minimum width=36mm,minimum height=9mm,align=center,font=\scriptsize]
    at (0.02,-1.02)
    {same logical suffix is reused\\across many free-set profiles};
  \node[note] at (0.02,-1.88)
    {all circuit work is included in reported time};

  \node[note,text=qbpnavy] at (4.98,1.55)
    {profile expands to sorted routing distances};
  \foreach \x/\v in {0/0,1/0,2/1} {
    \node[dtile] (d\x) at (3.93+1.05*\x,0.88) {\(\v\)};
  }
  \node[draw=qbpgreen!70!black,rounded corners=2mm,fill=qbpgreen!8,
    minimum width=38mm,minimum height=9mm,align=center,font=\footnotesize]
    at (4.98,-0.16)
    {\(\Delta_{d,i,\rho}=8\cdot0+5\cdot0+3\cdot1=3\)};
  \node[draw=qbporange,rounded corners=2mm,fill=white,
    minimum width=38mm,minimum height=9mm,align=center,font=\footnotesize]
    at (4.98,-1.37)
    {\(\Gb_{i,p}=\Ab_{i,p}+\Delta_{d,i,\rho}=\Ab_{i,p}+3\)};
  \node[note] at (4.98,-2.05)
    {identical to direct sorting and dotting};
\end{tikzpicture}
\caption{The heavy \(\mathrm{DDOT}\) pipeline.  The left subfigure exhaustively compiles every free physical mask once per device.  The middle subfigure performs the circuit-dependent work inside the measured run.  The right subfigure is the
exact indexed lookup executed while assembling each residual \GLB{} matrix.}
\label{fig:device-profile-dot}
\end{figure*}

The persistent representation has an offset array \(o[0\ldots2^N]\) and a
packed identifier array \(I\).  If
\[
 \operatorname{rank}_{m}(p)=
 {\operatorname{popcount}}\bigl(m\mathbin{\&}((1\ll p)-1)\bigr),
\]
then the profile used by the solver is obtained directly as
\begin{equation}
 \rho(m,p)=I\!\left[o[m]+\operatorname{rank}_{m}(p)\right].
 \label{eq:profile-lookup}
\end{equation}
The preprocessor therefore stores \(2^N+1\) offsets and exactly
\[
 \sum_{m=0}^{2^N-1}\operatorname{popcount}(m)=N2^{N-1}
\]
mask--candidate identifiers.  Profile signatures store the unique
histograms.  Both arrays depend only on \(\Db\): their potentially expensive construction is paid once per physical device, and the resulting artifact is reused unchanged for every circuit. The sizes of artifacts for the devices used are given in~\Cref{tab:device-profile-cache}.

\begin{table}[htbp]
\centering
\footnotesize
\renewcommand{\arraystretch}{1.1}
\caption{Device profiles: The number of masks \(2^N\) is shown with the device name. ``IDs'' counts mask--candidate profile identifiers; ``Prof.'' counts unique profile signatures; ``Payload'' is the profile size.}
\label{tab:device-profile-cache}
\begin{tabular}{lrrr}
\toprule
Device (Masks) & IDs & Prof. & Payload\\
\midrule
{\tt Melbourne} (\(2^{16}\)) & \(524,288\) & \(6,404\) &
   \(\approx1.4\) MB\\
{\tt Boeblingen} \((2^{20}\)) & \(10,485,760\) & \(18,032\) &
   \(\approx25.3\) MB \\
{\tt Cairo} (\(2^{27}\)) & \(1,811,939,328\) & \(1,593,009\) &
  \(\approx7.8\) GB \\
\bottomrule
\end{tabular}
\end{table}

Architecture-level caching is not new in general; recent routing work caches distances, paths, and execution-subgraph information for a fixed architecture~\citep{russon2026}. Our main contribution here is the all-subset free-set/candidate profile representation and {\em its use in exact \GLB{} assembly for an exact solver}, rather than the general observation that a fixed architecture can be preprocessed.  

\paragraph{\underline{DDOT memory representation}}
Let \(T_k\) be the number of distinct
distance profiles for free sets of cardinality \(k\), let
\(T=\sum_{k=1}^{N}T_k\), and let \(b_{\Db}\) be the number of distance
buckets. A full DDOT artifact occupies
\begin{equation}
\begin{split}
 M_{\mathrm{DDOT}}={}&272+4(2^N+1)+4N2^{N-1}\\
 &+4(N+1)+4(N+2)
 +8\left\lceil\frac{b_{\Db}}{8}\right\rceil T
 \quad\text{bytes}.
\end{split}
\label{eq:ddot-memory}
\end{equation}
The terms are, respectively, the header, mask-offset array,
mask--candidate profile identifiers, per-cardinality profile counts,
signature offsets, and packed profile signatures.  Mask offsets, profile
identifiers, and profile counts are unsigned 32-bit integers.  A signature
uses 64-bit words, each containing eight unsigned 8-bit bucket counts.
For {\tt Cairo}, \(N=27\), \(b_{\Db}=12\), and \(T=1,593,009\). Its exact artifact size is \(7.81\)~GB, with the breakdown in~\Cref{tab:ddot-memory-breakdown}: Profile identifiers account for
\(92.80\%\) of the file, mask offsets for \(6.87\%\), and signatures for
\(0.33\%\).  

\begin{table}[htbp]
\centering
\scriptsize
\setlength{\tabcolsep}{2.5pt}
\caption{Exact memory composition of the Cairo DDOT artifact.}
\label{tab:ddot-memory-breakdown}
\resizebox{0.85\columnwidth}{!}{%
\begin{tabular}{@{}lrrr@{}}
\toprule
Component & Representation & Elements & Bytes\\
\midrule
Header             & byte/\texttt{uint64} & ---              & \(272\)\\
Mask offsets       & \texttt{uint32}       & \(134,217,729\)& \(536,870,916\)\\
Profile identifiers& \texttt{uint32}       & \(1,811,939,328\)& \(7,247,757,312\)\\
Profile counts     & \texttt{uint32}       & \(28\)           & \(112\)\\
Signature offsets  & \texttt{uint32}       & \(29\)           & \(116\)\\
Profile signatures & \(2\times\)\texttt{uint64} &
 \(1,593,009\) & \(25,488,144\)\\
\midrule
Total & & & \(7,810,116,872\)\\
\bottomrule
\end{tabular}}
\end{table}

Although exponential, the generation is a one-time, per-device cost whose artifact can be reused across circuits. To mitigate the scalability issues due to memory limits, a selective cache may instead materialise only chosen free-set cardinalities or encountered masks, trading some runtime benefit for lower storage. We leave this as future work.

\paragraph{\underline{DDOT mechanics}}
During a run, the solver still constructs the descending
logical suffix weights \(f_{d,i}^\downarrow\) at depth \(d\).  For each
remaining logical row and each profile identifier of the corresponding free cardinality it computes the exact dot product
\begin{equation}
 \Delta_{d,i,\rho}=
 \sum_{t=1}^{u-1}f_{d,i}^\downarrow(t)d_{\rho}^\uparrow(t).
 \label{eq:ddot}
\end{equation}
This circuit-dependent table is built inside the solver invocation.
At a search node, \(\mathrm{INC}\) already supplies its current binary free mask \(M_R\).  The hot path that assembles a \GLB{} entry is consequently
\begin{equation}
 \Gb_{i,p}=\Ab_{i,p}+
 \Delta_{d,i,\rho(M_R,p)}.
 \label{eq:cached-glb-entry}
\end{equation}
Thus, the exhaustive device table 
is consumed for every residual matrix assembled by every
(\(\mathrm{DDOT}\)-enabled) search. Equation~\eqref{eq:profile-lookup} selects
the device profile, and \eqref{eq:cached-glb-entry} loads the exact
row-relaxation term in \eqref{eq:glb-entry} in \(\mathcal{O}(1)\) time per matrix entry, replacing the
distance extraction and sort. \Cref{fig:device-profile-dot} summarizes this three-stage data flow and its
timing boundary.  Device-profile compilation in~(a) is performed once
offline, whereas the per-circuit dot materialisation in~(b) and the
per-node indexed lookup in~(c) are both included in measured solver
time.

\begin{figure*}[htbp]
\centering
\begin{tikzpicture}[
  x=1.05cm,y=1cm,
  device/.style={circle,draw=qbpnavy,thick,fill=white,
    minimum size=4.7mm,inner sep=0pt,font=\scriptsize},
  link/.style={thick,qbpgray},
  bridge/.style={very thick,qbpred,densely dashed},
  card/.style={draw=qbpnavy!28,rounded corners=2.5mm},
  title/.style={font=\small\bfseries,text=qbpnavy},
  badge/.style={draw=qbpnavy!35,rounded corners=1.5mm,fill=white,
    font=\scriptsize,text=qbpnavy,inner xsep=4pt,inner ysep=2pt}
]
  \filldraw[card,fill=qbpblue!3] (-7.45,0.20) rectangle (-0.30,3.32);
  \filldraw[card,fill=qbpteal!3] (0.30,0.20) rectangle (7.45,3.32);
  \filldraw[card,fill=qbporange!3] (-7.45,-3.75) rectangle (7.45,-0.14);

  % Melbourne: the exact 16-vertex ladder labelling used by the matrix.
  \node[title] at (-3.88,3.02) {(a) \texttt{Melbourne}};
  \foreach \col in {0,...,7}{
    \pgfmathtruncatemacro{\bottomid}{15-\col}
    \node[device] (mt\col) at (-6.55+0.77*\col,2.15) {\col};
    \node[device] (mb\col) at (-6.55+0.77*\col,1.23) {\bottomid};
  }
  \foreach \col [evaluate=\col as \next using int(\col+1)] in {0,...,6}{
    \draw[link] (mt\col)--(mt\next);
    \draw[link] (mb\col)--(mb\next);
  }
  \foreach \col in {0,...,7}{\draw[link] (mt\col)--(mb\col);}
  \node[badge] at (-3.88,0.52)
    {$N=16$, $|{\cal E}| = 22$, $|\Aut(G_D)|=4$};

  % Boeblingen: exact matrix graph; red links are its four bridges.
  \node[title] at (3.88,3.02) {(b) \texttt{Boeblingen}};
  \foreach \row in {0,...,3}{
    \foreach \col in {0,...,4}{
      \pgfmathtruncatemacro{\id}{5*\row+\col}
      \node[device] (b\id) at (1.52+1.18*\col,2.57-0.57*\row) {\id};
    }
  }
  \foreach \row in {0,...,3}{
    \foreach \col [evaluate=\col as \next using int(\col+1)] in {0,...,3}{
      \pgfmathtruncatemacro{\a}{5*\row+\col}
      \pgfmathtruncatemacro{\b}{5*\row+\next}
      \draw[link] (b\a)--(b\b);
    }
  }
  \foreach \a/\b in {1/6,3/8,5/10,7/12,9/14,11/16,13/18}{
    \draw[link] (b\a)--(b\b);
  }
  \foreach \a/\b in {0/1,3/4,15/16,18/19}{
    \draw[bridge] (b\a)--(b\b);
  }
  \node[badge] at (3.88,0.43)
    {$N=20$, $|{\cal E}| = 23$, $|\Aut(G_D)|=4$};

  % Cairo: three internally disjoint core paths and six pendant bridges.
  \node[title,anchor=west] at (-6.95,-0.46) {(c) \texttt{Cairo}};
  \node[badge,anchor=east] at (7.25,-0.46)
    {$N=27$, $|{\cal E}| = 28$, $|\Aut(G_D)|=2$};
  \node[device] (c12) at (-6.15,-2.08) {12};
  \node[device] (c14) at (6.15,-2.08) {14};
  \node[device] (c13) at (0,-2.08) {13};
  \draw[link] (c12)--(c13)--(c14);

  \foreach \id/\xx/\yy in {
    10/-4.92/-1.62,7/-3.69/-1.36,4/-2.46/-1.22,
    1/-1.23/-1.14,2/0/-1.11,3/1.23/-1.14,
    5/2.46/-1.22,8/3.69/-1.36,11/4.92/-1.62}{
    \node[device] (c\id) at (\xx,\yy) {\id};
  }
  \draw[link]
    (c12)--(c10)--(c7)--(c4)--(c1)--(c2)--(c3)--(c5)--(c8)--(c11)--(c14);

  \foreach \id/\xx/\yy in {
    15/-4.92/-2.54,18/-3.69/-2.80,21/-2.46/-2.94,
    23/-1.23/-3.02,24/0/-3.05,25/1.23/-3.02,
    22/2.46/-2.94,19/3.69/-2.80,16/4.92/-2.54}{
    \node[device] (c\id) at (\xx,\yy) {\id};
  }
  \draw[link]
    (c12)--(c15)--(c18)--(c21)--(c23)--(c24)--(c25)--(c22)--(c19)--(c16)--(c14);

  \foreach \leaf/\base/\xx/\yy in {
    0/1/-1.55/-0.54,6/7/-4.10/-0.70,9/8/3.30/-0.70,
    17/18/-3.94/-3.43,20/19/3.94/-3.43,26/25/1.74/-3.49}{
    \node[device,draw=qbpred] (c\leaf) at (\xx,\yy) {\leaf};
    \draw[bridge] (c\leaf)--(c\base);
  }
  \node[font=\scriptsize,text=qbpred,anchor=west] at (-6.95,-3.48)
    {red dashed link $=$ bridge};
\end{tikzpicture}
\caption{Physical coupling graphs used in the experiments, reconstructed
from the distance matrices consumed by the solver.  Vertices are physical
qubits, and links are permitted two-qubit interactions.  Dashed links mark bridges. Although the graphs do not fit the manufacturer's chip geometry, the labels and adjacencies are exact.}
\label{fig:benchmark-devices}
\end{figure*}

\subsubsection{Parent assigned--unassigned reuse
  \texorpdfstring{(\(\mathrm{PAU}\))}{(PAU)}}

At one expanded parent, define
\[
 \Ab^{\mathrm{prnt}}_{i,p}=
 \sum_{a\in A}
 \bigl(\Fb_{i,a}\Db_{p,\pi(a)}+\Fb_{a,i}\Db_{\pi(a),p}\bigr)
 \qquad(i\in {\cal U},\ p\in {\cal R}).
\]
All children share this table.  If the child adds \(i^\star\mapsto p^\star\),
the entry for a remaining row \(i\) and free column \(p\) is
\begin{equation}
 \Ab^{\mathrm{chld}}_{i,p}=\Ab^{\mathrm{prnt}}_{i,p}
 +\Fb_{i, i^\star}\Db_{p,p^\star}
 +\Fb_{i^\star, i}\Db_{p^\star,p}.
 \label{eq:pau}
\end{equation}
Thus the complete assigned prefix is scanned once per expanded parent, not once for every child matrix.

\subsubsection{Certificate reuse
  \texorpdfstring{(\(\mathrm{CERT}\))}{(CERT)}}

When a child's \GLB{} is evaluated, its Hungarian matching, duals, and forced values for the child's next logical row already exist.  If that child survives and is later expanded, \(\mathrm{CERT}\) carries those exact forced-\LAP{} values in the node.  Operator \(L\) consumes them directly instead of assembling and solving the same child assignment a second time. Rejected children carry no certificate.

\subsubsection{Tiny exact assignment kernels
  \texorpdfstring{(\(\mathrm{TINY}\))}{(TINY)}}

For the two remaining rows, the optimum is found in \(\mathcal{O}(r)\) time from the best and second-best columns of the second row.  For three rows, a scan over the columns maintains the eight subsets of \(\{1,2,3\}\), allowing each column to extend a state by at most one row.  Both kernels solve exactly the same rectangular assignment as Hungarian; exhaustive random and search-call verification compares them against the generic implementation.

\section{Experimental results}
\label{sec:experiments}
\paragraph{\underline{Runtime environment}} All experiments were performed on one four-socket Intel Xeon E7-4870 v2 server running at 2.30 GHz.  Each socket contains 15 physical cores, giving 60 cores in total.  The host has four NUMA domains and 503 GiB of memory and runs Ubuntu 22.04. The solver is implemented in Chapel 2.8.0 and compiled with \cfg{--fast}, the LLVM/Clang 14 backend, and Qthreads. All runs were executed with a 21,600-second timeout. The feasible-incumbent generator is the released standard-library Python
3.10.12 implementation; it prints a verified mapping and exact integer cost,
and that cost is passed unchanged to the Chapel solver.

\paragraph{\underline{Datasets}} All inputs come from the benchmark instances of~\citep{valois2026}. There are three devices: {\tt Melbourne} ($N = 16$, 21 circuit instances), {\tt Boeblingen} ($N = 20$, 10 circuit instances), and {\tt Cairo} ($N = 27$, 16 circuit instances). Their coupling graphs are shown in \Cref{fig:benchmark-devices}. Within this suite, the {\tt Melbourne} instances are generally easier than those of the other two devices, whereas {\tt Cairo} contains a larger share of long-running instances.

\begin{table}[htbp]
\centering
\footnotesize
\caption{Final configurations retained for the journal experiments.  \(P\),
\(S_0\), \(S_\ast\), and \(L\) are parts of the combinatorial bundle.  \(E\) denotes the engineering bundle.}
\label{tab:configurations}
\begin{tabularx}{\columnwidth}{@{}l l X@{}}
\toprule
Label & Components & Role \\
\midrule
\(H\) & \HHB{} & the bound used in~\citep{valois2026} \\
\(G\) & \GLB{} & the baseline bound used in this work \\
\(GP\) & \GLB{} + \(P\) & assigned-cost filtering \\
\(GPS_0\) & \(GP\) + \(S_0\) & root physical-orbit pruning \\
\(GPS_\ast\) & \(GP\) + \(S_\ast\) & root and prefix symmetry \\
\(GPS_\ast L\) & \(GPS_\ast\) + \(L\) & conditioned parent-\LAP{} screening \\
\(GPS_\ast LE\) & \(GPS_\ast L\) + \(E\) &
final implementation with reusable device-profiles \\
\bottomrule
\end{tabularx}
\end{table}
 
\paragraph{\underline{Baseline and configurations}}  \Cref{tab:configurations} presents the configurations used in the experiment. Configuration \(H\) is the \HHB{} kernel of~\citep{valois2026}. Configuration \(G\) uses the \GLB{} path in the same codebase and is the main baseline because it is faster than $H$ in our experiments. Each subsequent row adds one tool to the preceding one; the last row, \(GPS_\ast LE\), is the final configuration. 

\paragraph{\underline{Execution control and reproducibility}}
All experiments used one Chapel locale with the Qthreads runtime. Processes and
their runtime threads were pinned explicitly to physical cores using
\texttt{taskset}; SMT was not used. All the experiments used one solver process. We did not impose an explicit NUMA memory-binding or interleaving policy; consequently, memory
placement followed the OS and Chapel runtime's first-touch
behavior. Each exact-solver configuration was executed five times, and the mean execution time is given in the tables. For all exact-solver configuration/circuit runs used in this work, either all five runs completed, or all timed out. We observed less than $15\%$ coefficient of variation (i.e., $stdev$/$mean$) values when the mean is less than 20 seconds and less than $10\%$ when it is more. It also tends to decrease with increasing solver time.   

\subsection{Single-thread ablation study on the combinatorial bundle}

The first experiment is a single-thread ablation study on the 21 {\tt Melbourne} and 10 {\tt Boeblingen} instances.  For this study, following the experimental setting of Valois~\emph{et al.}~\citep{valois2026}, the pruning cutoff for each run is set to \(z^\star+1\), where \(z^\star\) is the optimal value. This is an oracle cutoff used to isolate bound and node-processing costs, not a feasible incumbent generated by the algorithm. Some jobs are timed out under the 21,600-second restriction. Timeouts of \(G\) are excluded from geometric means. Every completed run returned the reference optimum from~\citep{valois2026}. 

The cost of building the {\tt Melbourne} and {\tt Boeblingen} device profiles is not charged to the search time of a circuit; however, their loading times are included in the reported end-to-end times. The profile built for {\tt Melbourne} took only $2.6$ seconds, whereas the {\tt Boeblingen} profile took $46.0$ seconds. Note that these are single-thread runtimes, and profile generation is straightforward to parallelise.

The results of the experiment are presented in \Cref{tab:suite-a-melbourne,tab:suite-a-rd}. On {\tt Melbourne},
\(P\) changes the geometric mean by only \(1.07\times\); root-orbit pruning is the first decisive step, taking the cumulative speedup over \(G\) to \(2.17\times\).  Prefix symmetry is neutral on these small cases, after which \(L\) and \(E\) raise the final gain to \(2.98\times\).  On the 20 {\tt Melbourne} cases completed by both \(H\) and the final configuration, the corresponding gain over \(H\) is \(8.40\times\).  The excluded case is also the clearest individual win: \(H\) times out on \cfg{10\_qft}, while \(GPS_\ast LE\) finishes in 19.9 seconds.

The harder {\tt Boeblingen} cases expose the cumulative value of the full chain. Over RD11--RD16, root symmetry improves \(GP\) by \(3.04\times\), extending it to the prefix stabilizer adds \(1.27\times\), conditioned parent-\LAP{} screening adds \(1.80\times\) by avoiding many child-bound evaluations, and engineering reuse adds another \(1.88\times\).  The
result is a \(13.27\times\) geometric-mean speedup over \(G\) and a \(375\times\) speedup over \(H\). The final configuration is the fastest configuration on every {\tt Boeblingen} case that it completes.

\begin{table}[htbp]
\centering
\scriptsize
\setlength{\tabcolsep}{1.1pt}
\renewcommand{\arraystretch}{0.92}
\caption{Ablation study for the combinatorial bundle (\emph{End-to-end, sequential runtimes}):  \(\mathrm{TO}\) denotes the six-hour timeout in all the five runs. Each GM row reports the geometric mean of \(t_G/t_X\) for the configuration $X$; {\tt Melbourne} geometric means use all 21 cases; its \(H\) entry is an upper bound obtained by using the 21,600 seconds timeout. {\tt Boeblingen} geometric means use RD11--RD16 which \(G\) completes.}
\label{tab:suite-a}
\label{tab:suite-a-melbourne}
\label{tab:suite-a-rd}
\scalebox{1.15}{
\begin{tabular}{@{}cl|r|rrrrr|r@{}}
\toprule
D. & Instance & \(H\) & \(G\) & \(GP\) & \(GPS_0\) & \(GPS_\ast\) &
\(GPS_\ast L\) & \(GPS_\ast LE\)\\
\midrule
\multirow{22}{*}{\rotatebox[origin=c]{90}{{\tt Melbourne}}}
 & 10\_qft     & TO    & 176.25 & 140.32 & 35.87 & 35.62 & 30.08 & 19.87\\
 & 10\_sqn     & 2.41  & 0.33   & 0.27   & 0.25  & 0.26  & 0.26  & 0.26\\
 & 10\_sym9    & 1.18  & 0.29   & 0.26   & 0.26  & 0.26  & 0.26  & 0.26\\
 & 11\_sym9    & 2.42  & 0.55   & 0.54   & 0.29  & 0.28  & 0.26  & 0.26\\
 & 11\_wim     & 2.72  & 0.26   & 0.26   & 0.26  & 0.26  & 0.26  & 0.26\\
 & 11\_z4      & 8.09  & 0.73   & 0.62   & 0.28  & 0.29  & 0.27  & 0.26\\
 & 12\_cycle10 & 3.78  & 1.11   & 1.11   & 0.46  & 0.42  & 0.34  & 0.30\\
 & 12\_rd84    & 1.73  & 0.77   & 0.71   & 0.31  & 0.38  & 0.29  & 0.26\\
 & 12\_sym9    & 5.65  & 0.26   & 0.26   & 0.26  & 0.26  & 0.26  & 0.26\\
 & 13\_dist    & 5.01  & 1.93   & 1.86   & 0.61  & 0.60  & 0.41  & 0.32\\
 & 13\_radd    & 3.56  & 0.78   & 0.75   & 0.32  & 0.38  & 0.28  & 0.27\\
 & 13\_root    & 5.07  & 1.86   & 1.81   & 0.60  & 0.57  & 0.41  & 0.30\\
 & 14\_clip    & 14.47 & 11.95  & 11.63  & 3.07  & 3.09  & 1.97  & 1.23\\
 & 14\_cm42a   & 1.65  & 0.33   & 0.27   & 0.28  & 0.27  & 0.27  & 0.28\\
 & 14\_cm85a   & 7.40  & 7.67   & 7.58   & 2.03  & 2.10  & 1.20  & 0.82\\
 & 15\_co14    & 1.04  & 0.64   & 0.64   & 0.32  & 0.35  & 0.37  & 0.31\\
 & 15\_misex1  & 2.10  & 0.87   & 0.91   & 0.41  & 0.34  & 0.29  & 0.29\\
 & 15\_sqrt7   & 4.29  & 1.57   & 1.30   & 0.47  & 0.52  & 0.38  & 0.31\\
 & 16\_inc     & 1.67  & 0.90   & 0.89   & 0.41  & 0.36  & 0.28  & 0.33\\
 & 16\_ising   & 0.27  & 0.30   & 0.28   & 0.27  & 0.30  & 0.32  & 0.27\\
 & 16\_mlp4    & 24.52 & 72.05  & 71.41  & 17.90 & 17.88 & 12.48 & 7.19\\
\cmidrule(l){2-9}
 & \textbf{GM \(t_G/t_X\)} & \(\le 0.282\) & 1.00 & 1.07 & 2.17 &
   2.15 & 2.59 & \textbf{2.98}\\
\midrule
\multirow{11}{*}{\rotatebox[origin=c]{90}{{\tt Boeblingen}}}
 & RD11 & 1,434.7  & 7.7    & 7.6    & 2.5    & 2.1     & 1.1    & 0.7\\
 & RD12 & 2,220.9  & 30.5   & 30.5   & 9.7    & 7.9     & 3.7    & 1.9\\
 & RD13 & 1,619.4  & 30.4   & 30.0   & 11.1   & 7.9     & 3.8    & 2.0\\
 & RD14 & 5,959.6  & 326.0  & 317.0  & 105.8  & 79.8    & 45.1   & 22.4\\
 & RD15 & 13,840.9 & 1,657.8 & 1,653.4 & 491.9  & 435.8   & 267.3  & 133.6\\
 & RD16 & 8,253.2  & 1,790.5 & 1,738.8 & 558.5  & 421.2   & 309.6  & 168.7\\
 & RD17 & TO      & TO     & TO     & 9,111.7 & 7,695.9  & 7,051.2 & 3,908.0\\
 & RD18 & TO      & TO     & TO     &  8,468.1 & 5,499.4 & 4,517.2 & 2,653.5\\
 & RD19 & TO       & TO      & TO      & TO      & TO       & TO      & TO\\
 & RD20 & TO      & TO     & TO     & TO     & TO      & TO     & 14,320.6\\
\cmidrule(l){2-9}
 & \textbf{GM \(t_G/t_X\)} & 0.04 & 1.00 & 1.02 & 3.09 & 3.92 &
   7.07 & \textbf{13.27}\\
\bottomrule
\end{tabular}
}
\end{table}
In terms of coverage, both \(G\) and \(H\)
complete only RD11--RD16, whereas symmetry brings RD17 and RD18 below the
six-hour limit and \(GPS_\ast LE\) additionally completes RD20.  Thus the
final configuration solves nine of ten {\tt Boeblingen} cases;
RD19 runs were unsuccessful for all configurations. Note that this experiment uses a single core of a single machine.

\Cref{tab:suite-a-trees} presents the impact of the configurations on pruning the tree. Relative to \(G\), \(S_0\) reduces the geometric-mean tree size by
\(3.78\times\) on {\tt Melbourne} and \(3.04\times\) on RD11--RD16. The full prefix reduction \(S_\ast\) yields corresponding reductions of
\(3.78\times\) and \(3.96\times\), respectively.  Thus \(S_\ast\) matches
\(S_0\) on {\tt Melbourne}, but provides a further \(1.30\times\) reduction
over \(S_0\) on RD11--RD16. Both devices have \(|\Gamma|=4\), but their point stabilizers differ.  The
{\tt Melbourne} root orbits are
\[
 \{0,7,8,15\},\quad \{1,6,9,14\},\quad
 \{2,5,10,13\},\quad \{3,4,11,12\}.
\]
Every orbit has size four. 
Once the first vertex is occupied, no  symmetry remains below the
root; consequently, \(GPS_0\) and \(GPS_\ast\) necessarily retain the same
{\tt Melbourne} tree.

\begin{table}[htbp]
\centering
\scriptsize
\setlength{\tabcolsep}{2pt}
\renewcommand{\arraystretch}{0.94}
\caption{Ablation study for the combinatorial bundle (\emph{Size of the explored tree}): \(\mathrm{TO}\)
denotes the six-hour timeout. The tree size describes the retained search tree; each GM row reports
the geometric mean of \(T_G/T_X\), where \(T_X\) is the tree size of
configuration \(X\). {\tt Melbourne} geometric means use all 21 cases and {\tt Boeblingen} geometric means use
RD11--RD16. The final column combines \(GPS_\ast\), \(GPS_\ast L\), and
\(GPS_\ast LE\), whose counts agree wherever all three are complete;
\(\dagger\) implies that the result was available only for \(GPS_\ast LE\).}
\label{tab:suite-a-trees}
\scalebox{1.12} {
\begin{tabular}{@{}cl|rrrr|r@{}}
\toprule
D. & Instance & \(H\) & \(G\) & \(GP\) & \(GPS_0\) &
\shortstack{\(GPS_\ast\)\\\(GPS_\ast L\)\\\(GPS_\ast LE\)}\\
\midrule
\multirow{22}{*}{\rotatebox[origin=c]{90}{{\tt Melbourne}}}
 & 10\_qft & TO & 6,446,808 & 6,446,807 & 1,611,707 & 1,611,707\\
 & 10\_sqn & 206 & 1,605 & 1,605 & 405 & 405\\
 & 10\_sym9 & 208 & 1,525 & 1,524 & 384 & 384\\
 & 11\_sym9 & 208 & 5,810 & 5,810 & 1,454 & 1,454\\
 & 11\_wim & 538 & 512 & 512 & 134 & 134\\
 & 11\_z4 & 617 & 8,105 & 8,105 & 2,030 & 2,030\\
 & 12\_cycle10 & 332 & 14,953 & 14,953 & 3,751 & 3,751\\
 & 12\_rd84 & 170 & 7,830 & 7,830 & 1,962 & 1,962\\
 & 12\_sym9 & 716 & 401 & 401 & 110 & 110\\
 & 13\_dist & 420 & 23,155 & 23,155 & 5,791 & 5,791\\
 & 13\_radd & 348 & 6,522 & 6,522 & 1,641 & 1,641\\
 & 13\_root & 389 & 23,073 & 23,073 & 5,775 & 5,775\\
 & 14\_clip & 1,168 & 180,928 & 180,928 & 45,265 & 45,265\\
 & 14\_cm42a & 693 & 1,079 & 1,078 & 280 & 280\\
 & 14\_cm85a & 566 & 97,496 & 97,496 & 24,383 & 24,383\\
 & 15\_co14 & 45 & 3,122 & 3,122 & 782 & 782\\
 & 15\_misex1 & 148 & 4,794 & 4,794 & 1,206 & 1,206\\
 & 15\_sqrt7 & 234 & 17,671 & 17,671 & 4,432 & 4,432\\
 & 16\_inc & 114 & 4,415 & 4,415 & 1,109 & 1,109\\
 & 16\_ising & 35 & 36 & 36 & 24 & 24\\
 & 16\_mlp4 & 1,935 & 1,003,775 & 1,003,775 & 251,183 & 251,183\\
\cmidrule(l){2-7}
 & \textbf{GM \(T_G/T_X\)} & --- & 1.00 & 1.00 & \textbf{3.78} & 3.78\\
\midrule
\multirow{11}{*}{\rotatebox[origin=c]{90}{{\tt Boeblingen}}}
 & RD11 & 93,628 & 71,366 & 71,366 & 22,497 & 18,071\\
 & RD12 & 114,348 & 296,358 & 296,358 & 92,587 & 74,811\\
 & RD13 & 66,388 & 237,950 & 237,950 & 88,219 & 60,798\\
 & RD14 & 324,179 & 3,178,349 & 3,178,349 & 1,075,823 & 799,606\\
 & RD15 & 685,293 & 14,719,034 & 14,719,034 & 4,459,393 & 3,699,591\\
 & RD16 & 411,128 & 13,254,737 & 13,254,737 & 4,455,115 & 3,343,208\\
 & RD17 & TO & TO & TO & 75,310,690 & 63,313,572\\
 & RD18 & TO & TO & TO & 43,257,340 & 34,189,708\\
 & RD19 & TO & TO & TO & TO & TO\\
 & RD20 & TO & TO & TO & TO & 147,633,049\(^{\dagger}\)\\
\cmidrule(l){2-7}
 & \textbf{GM \(T_G/T_X\)} & --- & 1.00 & 1.00 & \textbf{3.04} & \textbf{3.96}\\
\bottomrule
\end{tabular}
}
\end{table}

The {\tt Boeblingen} root orbits are
\[
\begin{gathered}
 \{0,4,15,19\},\ \{1,3,16,18\},\ \{2,17\},\\
 \{5,9,10,14\},\ \{6,8,11,13\},\ \{7,12\}.
\end{gathered}
\]
For \(p\in\{2,7,12,17\}\), the orbit has size two and therefore
\(|\Gamma_p|=4/2=2\).  A nonidentity left--right reflection fixes these
middle-column vertices while exchanging, for example,
\(0\leftrightarrow4\), \(1\leftrightarrow3\),
\(5\leftrightarrow9\), and \(6\leftrightarrow8\).  If the first assignment
uses a fixed vertex, this reflection survives and \(S_\ast\) keeps one next
candidate from each exchanged pair. This raises the cumulative {\tt Boeblingen} tree reduction
from \(3.04\times\) to \(3.96\times\), a further \(1.30\times\).

The almost identical \(G\) and \(GP\) tree sizes are expected.  The
filter \(P\) tests \(C_{A,A}\), whereas the subsequent \GLB{} is
\(C_{A,A}+\lap(\mathbf G)\geq C_{A,A}\); hence every candidate rejected by
\(P\) would also be rejected by \GLB{} before entering the DFS pool.  Thus
\(P\) does not reduce the retained tree, but can avoid constructing and
solving the corresponding assignment problem. The isolated one-node
differences on {\tt Melbourne} instances result from complete equal-cost allocations
being intercepted before insertion.

The configurations \(GPS_\ast\), \(GPS_\ast L\), and
\(GPS_\ast LE\) have the same tree sizes. Without \(L\), the rejection process is
\[
\begin{aligned}
 \text{candidate}&\longrightarrow\text{construct child}
   \longrightarrow\text{build child GLB}\\
 &\longrightarrow\text{Hungarian solve}
   \longrightarrow\text{reject child}.
\end{aligned}
\]
The rejected child is never inserted into the DFS pool and is therefore not
included in the solver's reported tree size.  With \(L\),
\Cref{fig:forced-lap}(a)--(c) depicts the corresponding path:
\[
\begin{aligned}
 \text{parent LAP cert.}&\longrightarrow\text{force candidate}
 &\longrightarrow\text{reject early}.
\end{aligned}
\]
By \Cref{prop:l-dominated}, every candidate rejected by \(L\) would also be rejected by the rebuilt child \GLB{} at a fixed cutoff. The two versions, therefore, insert the same nodes into the DFS pool, while \(L\) avoids constructing and bounding many children that will be rejected. For instance, on RD15, we have the following table:
\begin{center}
\footnotesize
\begin{tabular}{@{}lrr@{}}
\toprule
Configuration & Explored tree & GLB calls\\
\midrule
Without \(L\) & 3,699,591 & 47,499,942\\
With \(L\)    & 3,699,591 & 27,609,039\\
\bottomrule
\end{tabular}
\end{center}
Thus \(L\) screened approximately 19.9 million candidates before their child \GLB{} computations.

Finally, the \(H\) column
shows why tree size alone does not predict time: its stronger but more
expensive bound often retains far fewer nodes than \(G\), yet the wall-time
comparison in \Cref{tab:suite-a} favours the cheaper bound.

\subsection{Engineering bundle ablation study}

\Cref{tab:engineering-results} dissects the performance gain without changing the retained search tree or the decisions based on combinatorial evaluations. Every configuration expands the same tree. The \(1.83\times\) geometric-mean gain of the full bundle on RD15--RD18, therefore, comes from cheaper state maintenance, data access, and certificate reuse rather than additional pruning. 

\begin{table}[htbp]
\centering
\scriptsize
\setlength{\tabcolsep}{2.2pt}
\caption{Engineering bundle dissection on RD15--RD20 (\emph{End-to-end, sequential runtimes}): \(B=GPS_\ast L\), \(F=GPS_\ast LE\),
\(I=\mathrm{INC}\), \(D=\mathrm{DDOT}\), \(A=\mathrm{PAU}\),
\(C=\mathrm{CERT}\), and \(T=\mathrm{TINY}\).  The \(B+IDA\),
\(B+IDC\), and \(B+IDT\) are separate extensions of \(B+ID\).
The last row is the geometric-mean speedup relative to \(B\).}
\label{tab:engineering-results}
\scalebox{1.15}{
\begin{tabular}{@{}lrrrrrrr@{}}
\toprule
Instance & \(B\) & \(B+I\) & \(B+ID\) & \(B+IDA\) &
\(B+IDC\) & \(B+IDT\) & \(F\)\\
\midrule
RD15 & 267.3  & 264.3  & 219.6  & 156.5  & 197.6  & 220.2  & 133.6\\
RD16 & 309.6  & 296.1  & 252.3  & 180.1  & 236.3  & 246.1  & 168.7\\
RD17 & 7051.2 & 6138.4 & 5256.9 & 3496.6 & 5053.9 & 5409.0 & 3908.0\\
RD18 & 4517.2 & 4172.1 & 3565.9 & 2548.8 & 3367.9 & 3584.6 & 2653.5\\
RD19 & TO      & TO      & TO      & TO      & TO      & TO      & TO\\
RD20 & TO      & TO      & TO      & TO      & TO      & TO      & 14320.6\\
\midrule
 \textbf{GM \(t_B/t_X\)}  & 1.00 & 1.07 & 1.26 & 1.80 & 1.35 & 1.26 & \textbf{1.83}\\
\bottomrule
\end{tabular}
}
\end{table}

Incremental state and device dots together give a \(1.26\times\) gain over
the unengineered base.  Parent assigned--unassigned reuse ($IDA$) is the strongest isolated extension, raising this to \(1.80\times\); certificate reuse ($IDC$) is smaller, and the tiny kernels ($IDT$) are neutral relative to \(ID\) in geometric mean.  Overall, the full bundle is only \(1.02\times\) faster than the $IDA$ branch on average and is slower on RD17 and RD18.  Nevertheless, it wins on RD15 and RD16 and is the only engineering configuration to complete RD20. We therefore use the whole \(E\) bundle in the final configuration for its overall robustness and coverage, not because every component is individually or monotonically beneficial.

\subsection{Practical incumbent heuristics}

For completeness, before continuing the multi-threaded experiments, we present a short evaluation of the three increasingly expensive incumbent heuristics in \Cref{tab:heuristic-results} on all 47 benchmark
instances. One-/two-exchange descent more than halves the median gap while retaining millisecond cost. The final heuristic with a 30-second budget finds an optimum in 35 cases (all 21 {\tt{Melbourne}} circuits, nine of ten {\tt{Boeblingen}} circuits, and five of 16 {\tt{Cairo}} circuits) and reduces the mean gap from 11.27\% to 0.45\%. In the rest of this section, we use it to set the initial incumbent, thereby avoiding the oracle value \(z^\star\) in the practical solver configuration.

\begin{table}[htbp]
\centering
\scriptsize
\setlength{\tabcolsep}{4.2pt}
\caption{Feasible-incumbent quality on 47 instances.  Gaps are relative to the known optimum.}
\label{tab:heuristic-results}
\scalebox{1.2}{
\begin{tabular}{@{}lrrrr@{}}
\toprule
Heuristic & Opt. & \shortstack{Median\\gap} & \shortstack{Mean\\gap} &
  \shortstack{Mean\\time}\\
\midrule
{\sc GreedyAllocation}  & 4/47  & 8.60\% & 11.27\% & 7.4 ms \\
 \(+\) {\sc OneTwoDescent} & 9/47  & 3.95\% & 4.88\% & 14.3 ms \\
 \(+\) {\sc BudgetedSearch} (30 s) & 35/47 & 0.00\% & 0.45\% & 30.0 s\\
\bottomrule
\end{tabular}
}
\end{table}

\subsection{Shared-memory scaling}

The scaling experiment runs one \(GPS_\ast LE\) process at a time on RD14--RD18 with 1, 2, 4, 8, 16, and 32 Chapel threads. All 150 jobs were completed and returned the optimum value.  The geometric-mean search speedup rises monotonically from \(1.68\times\) at two threads to \(27.57\times\) at 32 threads, corresponding to 86\% parallel efficiency.  

\begin{table}[htbp]
\centering
\scriptsize
\setlength{\tabcolsep}{3.2pt}
\caption{{Strong scaling.  \(T_1\) is one-thread search time in
seconds; \(S_p=T_1/T_p\).  \(K_0\) is the exact upper bound returned by the
30-second heuristic; the last column includes that same heuristic cost in
both numerator and denominator.}}
\label{tab:scaling-results}
\scalebox{1.15}{
\begin{tabular}{@{}lrrrrrrrr@{}}
\toprule
Instance & {\(K_0\)} & \(T_1\) & \(S_2\) & \(S_4\) & \(S_8\) & \(S_{16}\) &
\(S_{32}\) & \shortstack{End-to-end\\\(S_{32}\)}\\
\midrule
RD14 & {428}  & 25.0   & 1.66 & 3.86 & 6.31 & 12.68 & 25.98 & 1.77\\
RD15 & {682}  & 146.8  & 1.53 & 3.69 & 6.55 & 13.23 & 30.41 & 5.03\\
RD16 & {822}  & 164.8  & 1.57 & 2.83 & 6.27 & 13.84 & 29.67 & 5.43\\
RD17 & {840}  & 5553.2 & 2.17 & 4.11 & 6.90 & 12.30 & 20.75 & 18.73\\
RD18 & {1140} & 2822.8 & 1.57 & 4.73 & 8.99 & 17.41 & 32.74 & 24.47\\
\midrule
GM & {---} & --- & 1.68 & 3.79 & 6.94 & 13.78 & \textbf{27.57} & 7.40\\
\bottomrule
\end{tabular}
}
\end{table}

Four instances retain the same expanded-tree count at every thread
count.  RD17 is the exception: its tree grows from 88.5 to 121.8 million
nodes, explaining its lower parallel efficiency.  The fixed heuristic also
dominates the shortest cases: the search-only gain on RD14 is
\(25.98\times\), but the practical end-to-end gain is \(1.77\times\).
Across all five cases, the 32-thread end-to-end geometric-mean speedup remains \(7.40\times\).  

\subsection{Final 60-core evaluation}

The final experiment uses one 60-thread \(GPS_\ast LE\) process at a time on RD15--RD20 and all 16 {\tt Cairo} circuits. The full device profiles have been used for thıs experiment. \Cref{tab:final60-results} presents the results of this experiment. With 60 threads, all runs are finished, each in less than half an hour, and all returned the reference optimum.

\begin{table}[htbp]
\centering
\scriptsize
\setlength{\tabcolsep}{3.5pt}
\caption{60-thread results.  \(K_0\) is the exact feasible upper bound returned
by the 30-second heuristic and supplied to the solver; end-to-end time includes
the heuristic, device-profile loading, and exact solver wall time. The reusable
device-artifact construction is excluded.}
\label{tab:final60-results}
\scalebox{1.15}{
\begin{tabular}{@{}clrrrr@{}}
\toprule
D. & Instance & \(n\) & \(z^\star\) & {\(K_0\)} &
  \shortstack{Search /\\end-to-end (s)}\\
\midrule
\multirow{6}{*}{\rotatebox[origin=c]{90}{{\tt Boeblingen}}}
 & RD15 & 15 & 682  & {682}  & 2.5 / 32.9\\
 & RD16 & 16 & 822  & {822}  & 2.7 / 33.1\\
 & RD17 & 17 & 826  & {840}  & 113.5 / 143.9\\
 & RD18 & 18 & 1140 & {1140} & 46.6 / 77.0\\
 & RD19 & 19 & 1400 & {1400} & 1150.5 / 1180.9\\
 & RD20 & 20 & 1582 & {1582} & 301.6 / 332.1\\
\midrule
\multirow{16}{*}{\rotatebox[origin=c]{90}{{\tt Cairo}}}
 & 5xp1       & 17 & 3034   & {3036}   & 0.4 / 36.2\\
 & ryy6       & 17 & 14942  & {14942}  & 53.0 / 88.9\\
 & alu3       & 18 & 8306   & {8306}   & 6.3 / 42.2\\
 & sqrt6      & 18 & 2222   & {2232}   & 0.3 / 36.2\\
 & add6       & 19 & 15408  & {15408}  & 0.2 / 36.8\\
 & cmb        & 20 & 4428   & {4448}   & 1.1 / 37.4\\
 & ex1010     & 20 & 224476 & {224732} & 0.1 / 36.6\\
 & decod      & 21 & 3644   & {3668}   & 0.2 / 36.7\\
 & dk17       & 21 & 3600   & {3632}   & 0.4 / 37.0\\
 & apla       & 22 & 8004   & {8004}   & 0.6 / 37.2\\
 & cm105a     & 22 & 1780   & {1908}   & 1.4 / 37.9\\
 & 63mod4096  & 23 & 52     & {52}     & 0.2 / 36.5\\
 & arb8       & 24 & 1880   & {1908}   & 1737.1 / 1774.3\\
 & cu         & 25 & 2652   & {2688}   & 30.2 / 69.4\\
 & rd73       & 25 & 654    & {672}    & 522.5 / 561.1\\
 & in0        & 26 & 44912  & {46740}  & 57.4 / 96.1\\
\bottomrule
\end{tabular}
}
\end{table}

The slowest end-to-end run is {\tt Cairo} \cfg{arb8} at 29.6 minutes on average (all runs are completed). Twenty of 22 were completed within ten minutes end-to-end, on average, and 18 were completed within five minutes. On RD15--RD18, which also appear in the single-thread experiment, 60 cores give a \(56.8\times\) geometric-mean search speedup. Even RD19, which none of the single-thread configurations in \Cref{tab:suite-a} completed within six hours, is certified in 19.7 minutes end to end on average.

Valois \emph{et al.} evaluated the same {\tt Cairo} cases on 64 nodes with 128 cores per node, 8192 cores in total~\citep{valois2026}.  The present solver certifies the same 16 optima with 60 cores in one shared-memory server. Since processors differ, this is a {\em resource scale} comparison rather than a cross-machine speedup. It nevertheless establishes that at this benchmark scale, graph-aware reductions help to move exact allocation into the reach of one conventional shared-memory machine. 

\subsection{The impact of device cache on {\tt Cairo}}
\label{sec:cairo-cache-comparison}

The {\tt Cairo} circuits in \Cref{tab:final60-results} use a full-profile cache. As mentioned before, profile construction is timed once and excluded from per-circuit mapping time because the device is fixed and the file is reusable. Here, we focus on two distinct questions: whether indexed profiles accelerate the search, and whether that saving is large enough to repay the load and the per-circuit setup.  
The experiment provides a clear separation between kernel and
end-to-end effects. Every cached/on-demand run returns the expected optimum. Search time improves for 13 instances, by \(1.54\times\) on geometric average. Parallel scheduling changes the explored count slightly: six pairs are identical, and the geometric mean of cached/on-demand tree size is 0.994. Normalising search time by explored nodes still gives a \(1.54\times\) geometric-mean kernel gain.
More importantly, for the long cases, the complete end-to-end run improves by \(1.92\times\) on \cfg{arb8}, \(1.55\times\) on \cfg{rd73}, and \(1.75\times\) on \cfg{ryy6}. \Cref{tab:cairo-cache-timings} retains
all per-instance timings needed to expose where the cache wins and loses.

\begin{table}[htbp]
\centering
\scriptsize
\setlength{\tabcolsep}{1.5pt}
\renewcommand{\arraystretch}{0.92}
\caption{Per-instance {\tt Cairo} device-profile timings for the 60-thread
\(GPS_\ast LE\) configuration.  \(t_s\) is exact-search time and \(t_w\)
is solver wall time, including the device artifact load and all
circuit-dependent setup, but excluding the common 30-second incumbent
heuristic. \(S=t^{\mathrm{on}}/t^{\mathrm{cache}}\); all times are seconds, and all cached runs return the same optimum.}
\label{tab:cairo-cache-timings}
\scalebox{1.2} {
\begin{tabular}{@{}lrrrrrr@{}}
\toprule
& \multicolumn{3}{c}{Search} & \multicolumn{3}{c}{Wall}\\
\cmidrule(lr){2-4}\cmidrule(l){5-7}
Instance & On-demand & Cache & \(S_s\) & On-demand & Cache & \(S_w\)\\
\midrule
5xp1       & 0.8    & 0.4    & 2.00 & 1.1    & 6.2    & 0.18 \\
ryy6       & 102.8  & 53.0   & 1.94 & 103.1  & 58.8   & 1.75 \\
alu3       & 12.2   & 6.3    & 1.94 & 12.5   & 12.2   & 1.02 \\
sqrt6      & 0.2    & 0.3    & 0.67 & 0.6    & 6.2    & 0.10 \\
add6       & 0.4    & 0.2    & 2.00 & 0.8    & 6.8    & 0.12 \\
cmb        & 2.2    & 1.1    & 2.00 & 2.6    & 7.4    & 0.35 \\
ex1010     & 0.1    & 0.1    & 1.00 & 0.5    & 6.5    & 0.08 \\
decod      & 0.2    & 0.1    & 2.00 & 0.6    & 6.7    & 0.09 \\
dk17       & 0.8    & 0.4    & 2.00 & 1.2    & 6.9    & 0.17 \\
apla       & 1.1    & 0.6    & 1.83 & 1.5    & 7.1    & 0.21 \\
cm105a     & 2.7    & 1.4    & 1.93 & 3.1    & 7.8    & 0.40 \\
63mod4096  & 0.1    & 0.2    & 0.50 & 0.5    & 6.5    & 0.08 \\
arb8       & 3343.0 & 1737.1 & 1.92 & 3343.4 & 1744.2 & 1.92 \\
cu         & 46.2   & 30.2   & 1.53 & 46.5   & 39.3   & 1.18 \\
rd73       & 822.8  & 522.5  & 1.57 & 823.0  & 531.1  & 1.55 \\
in0        & 91.9   & 57.4   & 1.60 & 92.2   & 65.8   & 1.40 \\
\midrule
Geometric mean
& -- & -- & \textbf{1.54}
& -- & -- & \textbf{0.35} \\
\bottomrule
\end{tabular}
}
\end{table}

The negative~(less than 1$\times$ speedup) wall-time result on smaller (and hence easier) circuits is informative. The on-demand backend has only \(0.28\)--\(0.43\) seconds of non-search setup, whereas full-cache non-search setup---including artifact loading, validation, dot computations, and startup---takes \(5.79\)--\(9.10\) seconds per solver process. Consequently, the cache is faster in solver wall time on six of 16 cases but \(2.86\times\) slower geometrically over the complete experiment, despite its faster search kernel. Building the reusable 7.81-GB device artifact took 113.9 seconds (with 60 threads) and is intentionally excluded from both arms. 

A case-by-case, equal-weight comparison hides the batch
benefit. The sum of the 16 solver-wall times in~\Cref{tab:cairo-cache-timings} falls from 4433.2 to 2519.5 seconds. Adding the common 30-second heuristic gives end-to-end totals of 4913.2 versus 2999.5 seconds; 
charging the 113.9-second cache construction once raises the
cached total to 3113.4 seconds, for a $1.58\times$ batch speedup.
Overall, profiling the device pays for sufficiently long searches, while exact on-demand processing remains preferable for short per-circuit runs. A persistent service that retained the cache across circuits could remove the per-process-load penalty, but was left outside in our experiment.

\FloatBarrier
\section{Related work}
\label{sec:related}

Foundational qubit-allocation work formalises initial placement and relates it to graph embedding, subgraph isomorphism, and token swapping~\citep{siraichi2018,siraichi2019}. The closest prior work is the exact allocation framework of Valois \emph{et al.} \citep{valois2026}. It establishes the qubit-allocation
formulation, a strong \HHB{}-based branch-and-bound procedure, and scalable Chapel search. Our work is complementary to~\citep{valois2026}; we have built all the proposed techniques on top of a reduced version of the codebase of Valois~et~al.~\citep{valois2026}, and we kept \HHB{} as a control and \GLB{} as a cheaper alternative, while studying graph-aware reductions. Hence, although we do not present a new parallel search framework, we have shown that our stronger graph-aware node processing can postpone the point at which distributed execution becomes necessary. For instance, their {\tt Cairo} experiment used 64 nodes of 128 cores; ours uses one conventional 60-core shared-memory server and obtains the same certification faster. Furthermore, since we have started with an existing codebase, which is highly scalable on a distributed compute cluster, our solver can be adapted to the distributed setting and solve much harder instances. 

Exact assignment-and-routing formulations optimise richer dynamic objectives than the static one studied here. Integer-programming and exact routing approaches, for example, model routing choices and circuit evolution explicitly~\citep{nannicini2022,wagner2023}. They define the scope boundary of the present work rather than direct runtime baselines.

The two baseline engines also come directly from the \QAP{} literature.
Gilmore and Lawler reduce row-wise quadratic costs to a linear assignment
\citep{gilmore1962,lawler1963}; Anstreicher reviews this classical construction and its opposite-order scalar-product interpretation~\citep{anstreicher2003}. The Hahn--Grant dual procedure retains a much larger four-index complementary-cost representation~\citep{hahn1998,hahn1998bnb}. General \QAP{} surveys and the fewer-objects-than-locations variant provide broader context~\citep{loiola2007,kaibel1998}. Section~\ref{sec:background} makes
their details and complexity explicit, since a comparison is essential to understanding the benefits of the proposed techniques. Note that neither \GLB{} nor \HHB{} is claimed as a contribution of this paper. 

Zhu \emph{et al.} are the direct prior work for the symmetry
component. Their exact \QAP{} branch-and-bound algorithm tests
structural symmetry among the remaining physical qubits and eliminates
redundant branches, following the partial-node symmetry test of Mautor
and Roucairol~\citep{zhu2020,mautor1994}. In the notation of this paper,
the underlying exact equivalence condition is the orbit relation of the
subgroup of \(\Aut(D)\) that fixes the occupied physical vertices
pointwise. In the broader \QAP{} literature, symmetry, equivalent facilities, and
representative branching are well-established reductions
\citep{fischetti2012,ostrowski2011,mckay1998}. Group-theoretic
restrictions and subarchitecture selection have also been used to reduce
optimal quantum-mapping search spaces
\citep{burgholzer2022,peham2023}. Our contribution with \(S_\ast\) is its incremental group-based realisation, its separate ablation against \(S_0\), and its integration with conditioned parent-\LAP{} screening and reusable device profiles.

The Hungarian algorithm is a standard assignment primitive
\citep{kuhn1955}, and alternating-path sensitivity is classical in ranked and constrained matching~\citep{chegireddy1987}. The proposed \(L\) uses a parent residual certificate to price every next allocation exactly within the parent relaxation, then exposes that computation as a separately measurable branch-and-bound operator. As \Cref{prop:l-dominated} shows, its main benefit is early avoidance of child-\GLB{} work, not a stronger retained-tree bound. 

Heuristic qubit mappers, including SABRE \citep{li2019sabre}, solve a
different but practically essential problem: they interleave mapping and
routing for circuits beyond the reach of exhaustive search. Exact static
allocation is not proposed as a substitute. Instead, it supplies verified
reference costs, controlled search trees, and graph-level evidence that can eventually inform heuristic policies. The benchmark is also aligned with reusable quantum-compilation evaluation practice
\citep{quetschlich2023}.

\section{Conclusion and future work}
\label{sec:conclusion}

This paper develops a focused graph-aware approach for exact
static qubit allocation.  Starting from a simple injection model, it
separates three exact mathematical ideas---unavoidable assigned cost,
device-graph symmetry that survives a partial assignment, and conditioned parent-\LAP{} values---from an engineering bundle that makes those certificates cheap to maintain. 

The sequential experiments show where the gain comes from.  Assigned-cost
filtering alone is inexpensive but has little runtime effect; symmetry
produces the first large retained-tree reduction, conditioned parent-\LAP{} values screen candidates before their child \GLB{} computations, and implementation reuse accelerates the same certified tree. Together, the operators give a \(13.27\times\) geometric-mean improvement over the existing \GLB{} path on RD11--RD16 and extend six-hour RD coverage from six to nine of ten cases.

Overall, shared-memory execution is shown to be sufficient for the benchmark
scale considered here. A heuristic-seeded 60-core process certifies all 22 final {\tt Boeblingen} and {\tt Cairo} cases within 29.6 minutes per instance on one machine. This complements the literature~\cite{valois2026}: their work shows that branch-and-bound can exploit a cluster, whereas ours shows that graph-aware reductions can make this process much faster and feasible on a shared-memory, multicore server for some instances. Broader circuit
families, topology-diversity studies, adaptive selective caching, transfer of exact certificates into dynamic-routing heuristics, and distributed execution beyond the single-machine frontier remain important future directions.

\section*{Data and code availability}

The benchmark matrices and source distribution are
publicly available at \url{https://github.com/kamerkaya/
StaticQubitAllocation}. The distribution includes the exact
solver, device-profile builders, and the one-instance incumbent
generator.

\section*{Funding}

This research did not receive any specific grant from funding agencies in the public, commercial, or not-for-profit sectors.

\section*{Declaration of competing interest}

The author declares that they have no known competing
financial interests or personal relationships that could have appeared to
influence the work reported in this paper.

\section*{CRediT authorship contribution statement}

Kamer Kaya: Conceptualization, Methodology, Software, Writing – original draft.

\section*{Declaration of generative AI and AI-assisted technologies in the
manuscript preparation process}

During the preparation of this work, the author used OpenAI ChatGPT and
Codex to support code development, experiment orchestration, and manuscript
editing.  The author reviewed and verified the resulting code, analyses,
references, and text, revised the material as needed, and takes full
responsibility for the content of the article.

\bibliographystyle{elsarticle-num}
\bibliography{main}
\balance
\end{document}